\newcommand{\tO}{\widetilde{O}}
\def\Exp{\mathbb{E}}
\def\Var{\mathrm{Var}}
\DeclareMathOperator*{\badcycle}{\normalfont\texttt{bad-cycle}}
\newcommand{\Pcal}{\mathcal{P}}
\newtheorem{theorem}{Theorem}
\newtheorem{lemma}[theorem]{Lemma}
\newtheorem{claim}[theorem]{Claim}
\begin{document}
\title{Testing properties of signed graphs}
\author{Florian Adriaens\thanks{KTH, Sweden. Email: adriaens@kth.se} \and
Simon Apers\thanks{CWI, the Netherlands and ULB, Belgium. Email: smgapers@gmail.com}}
\date{}
\maketitle

\begin{abstract}
In graph property testing the task is to distinguish whether a graph satisfies a given property or is ``far'' from having that property, preferably with a sublinear query and time complexity.
In this work we initiate the study of property testing in signed graphs, where every edge has either a positive or a negative sign.
We show that there exist sublinear algorithms for testing three key properties of signed graphs: \emph{balance} (or \emph{2-clusterability}), \emph{clusterability} and \emph{signed triangle freeness}.
We consider both the dense graph model, where we can query the (signed) adjacency matrix of a signed graph, and the bounded-degree model, where we can query for the neighbors of a node and the sign of the connecting edge.

Our algorithms use a variety of tools from graph property testing, as well as reductions from one setting to the other.
Our main technical contribution is a sublinear algorithm for testing clusterability in the bounded-degree model.
This contrasts with the property of $k$-clusterability which is not testable with a sublinear number of queries.
The tester builds on the seminal work of Goldreich and Ron for testing bipartiteness.
\end{abstract}

\section{Introduction}
\label{sec:intro}

\paragraph*{Signed Graphs}

A \emph{signed graph} is a graph where every edge either has a positive or a negative label.
Formally it is denoted as $G = (V,E,\sigma)$ with node set $V = [N]$, edge set $E \subseteq V \times V$ and edge labelling $\sigma:E \to \{+,-\}$.
Such graphs model a variety of different scientific phenomena.
The widely studied correlation clustering problem \cite{Bansal, DEMAINE2006172} was orginally motivated by a document classification problem, where one has knowledge of pairwise similarities between documents, and the goal is to cluster the documents into (an undefined number of) groups such that within a group the documents are similar to each other, while across groups they are less similar.
Several authors \cite{Bonchi,nipsruochun} have focused on the related problem of finding large polarized communities.
In physics, signed graphs and their frustration index are utilized to model the ground-state energy of Ising models \cite{kasteleyn1963dimer}.
A third example are social networks, in which interactions between individuals can often be categorized into binary categories: trust versus distrust, friendly or antagonistic, etc.
An important aspect in the edge formation of social networks is the sign of triangles.
According to structural balance theory from social psychology \cite{Cartwright1956StructuralBA}, triangles with either one or three positive edges are more plausible, and this prevalence has been observed in real-life social networks \cite{LeskovecSigned,TangSurvey}.
Many methods and algorithms for link and sign prediction try to capitalize on this. 
Aside from link prediction, the survey of \cite{TangSurvey} lists several other important data mining tasks in signed social media networks.

Signed graphs generalize unsigned graphs, and as such they can have different properties than unsigned graphs.
One important example is the property of \emph{clusterability} or \emph{weak balance}, this was first introduced in \cite{davis1967clustering} and it is the subject of the correlation clustering problem \cite{Bansal}.
A signed graph is clusterable if there exists a partitioning of the nodes into an a priori unknown number of components such that (i) every positive edge connects two nodes in the same component, and (ii) every negative edge connects two nodes in different components.
An equivalent characterization, in terms of forbidden subgraphs, is that the signed graph contains no cycles with exactly one negative edge \cite{davis1967clustering, DEMAINE2006172}.
Clusterability does not appear to have a meaningful interpretation in the case of unsigned graphs.
For example, if one views an unsigned graph as a signed graph with the restriction that all the edges have the same sign, whether positive or negative, then clearly any such graph is clusterable since it contains no cycles with exactly one negative edge.

Other signed graph properties are closer related to unsigned graph properties.
For example, the property of \emph{balance} or \emph{2-clusterability} in signed graphs \cite{konig1936theorie,harary1953} in fact generalizes that of bipartiteness in unsigned graphs.
A signed graph is balanced if it is clusterable into exactly two components, with only positive edges inside the components and negative edges between the components.
It follows that a signed graph with only negative edges is balanced iff the underlying unsigned graph is bipartite.
There is also a reduction in the opposite direction, transforming a signed graph to an unsigned graph by replacing each positive edge by a path of two negative edges, and afterwards omitting all the signs of the edges \cite{ZASLAVSKY201831}.
This reduction preserves distances, in the sense that the minimum number of edges that need to be deleted to make the signed graph balanced (the \emph{signed frustration index}) is equal to the minimum number of edges that need to be deleted to make the transformed graph bipartite (the \emph{unsigned frustration index}) \cite[Proposition 2.2]{ZASLAVSKY201831}.
We will use this reduction to show that in the bounded-degree model we can reduce the problem of testing balance to that of testing bipartiteness.\footnote{This reduction does not work well in the dense graph model, since the transformed graph will be typically sparse.}

A final signed graph property that we investigate is that of \emph{signed triangle freeness}.
For a given signed triangle we say that a signed graph is signed triangle free if the given triangle does not occur in the graph.
The occurrence or absence of certain signed triangles is relevant in structural balance theory \cite{Cartwright1956StructuralBA} and it generalizes the notion of triangle freeness in unsigned graphs.

\paragraph*{Graph Property Testing}
Graph property testing was formally introduced in the seminal work of Goldreich, Goldwasser and Ron \cite{GoldreichLearning}.
As input we are given query access to an (unsigned) graph $G = (V,E)$ with node set $V = [N]$ and edge set $E \subseteq V \times V$.
We would like to decide whether the graph obeys a certain property $\Pcal$, or whether it is ``far'' from any graph having that property.
This is a relaxed setting as compared to that of deciding $\Pcal$ and it often allows for algorithms that have sublinear query and/or time complexity.
Such sublinear algorithms have been proposed for a wide range of graph properties such as bipartiteness \cite{GoldreichLearning,goldreich1999sublinear}, $k$-colorability \cite{GoldreichLearning,Alonkcol}, cycle-freeness \cite{GoldreichBounded,czumaj2014finding} and more generally monotone graph properties \cite{alon2008every} and minor-closed properties \cite{kumar2018finding,kumar2019random}.
We refer the interested reader to the survey by Goldreich \cite{goldreich2010introduction}.

The precise definition of ``far'' however depends on the type of query access that we have to the graph:
\begin{enumerate}
\item
In the \emph{dense graph model} \cite{goldreich_2017} we are able to query the adjacency matrix entries.
A query takes the form $(v,w) \in [N] \times [N]$ and the reply is $1$ if there is an edge between $v$ and $w$, otherwise it is $0$.
Two graphs $G=(V,E)$ and $G'=(V,E')$ are said to be $\epsilon$-far from each other if they differ in at least an $\epsilon$-fraction of the adjacency matrix entries.
Equivalently, at least $\epsilon N^2$ edges have to be added or removed to turn $G$ into $G'$.
\item
In the \emph{bounded-degree graph model} we are given an upper bound $d$ on the degrees of the graph, and we are given access to the adjacency list of $G$.
A query takes the form $(v,i)$, with $v \in [N]$ and $i \in [d]$.
If the degree of $v$ is at least $i$, then the query is answered with the $i^{\text{th}}$ neighbor $u$ of node $v$ (in arbitrary order).
If $v$ has degree smaller than $i$, an error symbol is returned.
Two graphs $G=(V,E)$ and $G'=(V,E')$ are $\epsilon$-far from each other if at least $\epsilon nd$ edges have to be modified (added or removed) to turn $G$ into $G'$.
\end{enumerate}
Using these definitions, a graph is \emph{$\epsilon$-far from having property $\Pcal$} if it is $\epsilon$-far from any graph having property $\Pcal$.

A property testing algorithm for $\Pcal$ is a randomized algorithm that, given query access to $G$ and an error parameter $\epsilon$, should behave as follows: (i) if $G$ has property $\Pcal$ then the algorithm should accept with probability at least $2/3$, whereas (ii) if $G$ is $\epsilon$-far from having property $\Pcal$, then the algorithm should reject with probability at least $2/3$.
If $G$ satisfies neither condition than the algorithm can behave arbitrarily.
This is the main reason why testing algorithms are often far more efficient than algorithms for effectively deciding whether $G$ has property $\Pcal$ or not.
If a property tester always accepts graphs having property $\Pcal$ (i.e., it never falsely rejects), then it is called a \emph{one-sided} property tester for $\Pcal$.
Otherwise it is called a \emph{two-sided} property tester.

\paragraph*{Testing in Signed Graphs}

Given the utility of graph property testing, and the importance of signed graphs in many applications, we believe that extending the framework of graph property testing to signed graphs is worthwhile.
The definitions of distance and query access for unsigned graphs are easily extended to signed graphs:
\begin{enumerate}
\item
In the \emph{dense signed graph model} adjacency matrix queries are now answered by an element from $\{0,-,+\}$.
A signed graph is $\epsilon$-far from property $\Pcal$ if a least $\epsilon N^2$ edge modifications (addition, removal or sign switch) have to be made to obtain a graph that satisfies $\Pcal$.
\item
In the \emph{bounded-degree signed graph model} a query $(v,i) \in [N] \times [d]$ is now answered either by the $i$-th neighbor $w$ of $v$ and the sign $\sigma(v,w)$ of the corresponding edge, or by an error symbol if $v$ has less than $i$ neighbors.
A signed graph is $\epsilon$-far from $\Pcal$ if a least $\epsilon nd$ edge modifications (addition, removal or sign switch) have to be made to obtain a graph that satisfies $\Pcal$.
\end{enumerate}
Note that in both cases the edge modifications now consist of edge additions, removals, as well as sign switches.
However, the properties discussed in this paper (signed triangle freeness, balance and clusterability) are all monotonous, and hence one may restrict the attention to edge removals.

\subsection{Results and Techniques}
In this work we investigate property testing algorithms for the canonical signed graph properties of signed triangle freeness, balance and clusterability.
Table~\ref{table:summaryresults} summarizes our results in terms of query complexity of the proposed testers.
The $\tO(\cdot)$-notation hides polylogarithmic factors in its argument and in $N$, in the bounded-degree model it also hides a polynomial dependence on the degree bound $d$.
The time complexity for testing balance and clusterability in the bounded-degree model is bounded by the query complexity $\tO(\sqrt{N}/\mathrm{poly}(\epsilon))$.
In all other cases the time complexity is at most exponential in the query complexity.
In the rest of the section we give a sketch of our techniques.

\begin{table}[ht]
\centering
\def\arraystretch{1.4}%
\begin{tabular}{|c|c|c|}
\hline
& Dense signed graph model & Bounded-degree signed graph model \\
\hline
Signed triangle freeness & $\tO(\mathrm{tower}(\log(1/\epsilon)))$ \cite{Fox} & $\tO(1/\epsilon)$ \cite{goldreich2010introduction} \\
\hline
Balance & $\tO(1/\epsilon)$ \cite{Sohlercannon} & $\tO(\sqrt{N}/\mathrm{poly}(\epsilon))$ \\
\hline
Clusterability & $\tO(1/\epsilon^7)$ & $\tO(\sqrt{N}/\mathrm{poly}(\epsilon))$ \\
\hline
\end{tabular}
\caption{Query complexity of the different property testers. All testers are one-sided except for the clusterability tester in the dense model. The function $\mathrm{tower}(\log(1/\epsilon)))$ denotes a power tower of $2$'s of height $O(\log(1/\epsilon))$.}
\label{table:summaryresults}
\end{table}

\paragraph*{Dense signed graph model}

We first describe property testing algorithms in the dense signed graph model.
The property of \emph{signed triangle freeness} can be efficiently tested by interpreting the signed graph as an edge-colored graph.
We can then use Fox's edge-colored triangle removal lemma \cite{Fox}, similar to the case of triangle freeness in unsigned graphs.
For the property of \emph{balance} or 2-clusterability we can use a reduction to a constraint satisfaction problem (CSP): every node corresponds to a Boolean variable (which indicates its cluster), a positive edge imposes an equality constraint between its endpoints and a negative edge imposes an inequality constraint.
We can then test balance by using a property testing algorithm for CSPs \cite{ALON2003212,Sohlercannon,andersson2002property}.
The property of \emph{clusterability} can also be cast as a CSP in which the node variables now take arbitrary integer values in $[N]$ (indicating their cluster).
However, the aforementioned CSP testers \cite{ALON2003212,Sohlercannon,andersson2002property} are not efficient in such a regime.
We circumvent this problem by proving that a signed graphs that is clusterable is necessarily $\epsilon/10$-close to being clusterable into $O(1/\epsilon)$ clusters.
Using this we reduce the problem of testing clusterability to that of distinguishing graphs that are $\epsilon/10$-close to being $O(1/\epsilon)$-clusterable from those that are $\epsilon$-far from being $O(1/\epsilon)$-clusterable.
This problem corresponds to \emph{tolerantly} testing a CSP where the variables now take values in $[O(1/\epsilon)]$, and this can be done efficiently using an algorithm by Andersson and Engebretsen \cite{andersson2002property}.

\paragraph*{Bounded-degree signed graph model}

Now we turn to the bounded-degree model.
Testing signed triangle freeness is trivial in this model, similar to the unsigned case \cite{goldreich2010introduction}.
Testing balance requires more care.
While we can again cast the problem as a CSP, we are not aware of any appropriate property testing algorithms for CSPs in the bounded-degree model.
Rather we reduce the problem of testing balance for signed graphs to that of testing bipartiteness for unsigned graphs, for which we can use the algorithm of Goldreich and Ron \cite{goldreich1999sublinear}.
The reduction is based on a transformation described by Zaslavsky \cite{ZASLAVSKY201831}, which maps balanced (resp.~unbalanced) signed graphs to bipartite (resp.~nonbipartite) unsigned graphs.
The resulting algorithm's query complexity has an optimal $\sqrt{N}$-dependence, which follows from the $\Omega(\sqrt{N})$ lower bound for testing bipartiteness.

Finally, and this is our main technical contribution, we describe a property testing algorithm for clusterability.
While we can again reduce the problem to testing $O(1/\epsilon)$-clusterability, similar to the dense case, the problem is that $k$-colorability (which is a special case of $k$-clusterability) is not testable in the bounded-degree model \cite{1313803}.
Rather, we base our algorithm on the forbidden subgraph characterization by Davis \cite{davis1967clustering}, which states that a signed graph is clusterable if and only if it has no cycles with exactly one negative edge.
We then use random walks to find such a cycle: first we pick a random initial node and perform a large number of random walks \emph{on the positive edges} of $G$, then we check for the existence of a negative edge between any pair of nodes that were visited by a random walk.
Such a negative edge necessarily yields a bad cycle.
The correctness of this algorithm is easy to prove when the positive edges in $G$ induce an expander.
For the general case we build on the (unsigned) graph decomposition results of Goldreich and Ron \cite{goldreich1999sublinear}.

\subsection{Conclusion and open questions}
On the one hand, our work demonstrates that key properties of signed graphs can be tested very efficiently.
This seems to not have been studied before.
On the other hand, we introduce signed graphs as an interesting setting for studying graph property testing.
Our work leaves open a number of questions and future directions:
\begin{itemize}
\item
A lot of effort has been put in characterizing the set of properties that are testable (using $\tO(\mathrm{poly}(1/\epsilon))$ queries) in the dense graph model \cite{alon2008every,AlonTestable,GoldReich2,alon2009combinatorial} and the bounded-degree graph model \cite{benjamini2010every,czumaj2009testing,czumaj2014finding,newman2013every}.
It would be interesting to characterize the set of signed graph properties that are testable.
\item
In a very recent work by Kumar, Seshadhri and Stolman \cite{kumar2021random} an efficient \emph{partition oracle} was proposed.
For minor-closed graph families such an oracle gives local access to a certain global decomposition of the graph.
The study of such a decomposition and corresponding oracle for signed graphs seems like an interesting future direction, especially given the connection between signed graphs and social networks.
\item
Finally, we did not succeed in proving a $\Omega(\sqrt{N})$ lower bound for testing clusterability in the bounded-degree signed graph model, and hence we leave this as an open question.
\end{itemize}

\section{Property testing in the dense signed graph model}

\subsection{Signed Triangle Freeness}
A signed triangle is any triangle with a fixed sign assignment of its edges.
As mentioned in the introduction, it is often interesting to check whether a signed graph for instance contains any triangles with exactly one negative edge.
While it can be computationally expensive to effectively decide this, especially for massive graphs such as social networks, it might be easier to \emph{test} whether the graph is free of such triangles.

Testing triangle freeness for unsigned graphs has been well studied.
It is a direct application of the triangle removal lemma \cite{ruzsa1978triple,Fox}.
For some function $f$, the canonical one-sided tester simply picks $f(\epsilon)$ random triples in $[N]$ and rejects the graph if any of the triples induces a triangle.
If the graph is triangle free, then we always accept the graph, so that the tester is indeed one-sided.
If the graph is $\epsilon$-far from being triangle free, than the triangle removal lemma of Fox \cite{Fox} proves that the graph contains at least $\delta(\epsilon) \, \binom{n}{3}$ triangles.
Here $\delta(\epsilon)$ is a function bounded by the inverse of the towering function $\mathrm{tower}(\log(1/\epsilon))$, which corresponds to a tower of $2$'s of height $O(\log(1/\epsilon))$ (i.e., $2$-to-the-$2$-to-the-\dots-to-the-$2$, $O(\log(1/\epsilon)$ times).
Hence if we sample $f(\epsilon) \in \Theta(1/\delta(\epsilon))$ triples, then with high probability one of them will induce a triangle and we will correctly reject the graph.
This yields a total query complexity of $O(1/\delta(\epsilon))$.

Testing signed triangle freeness in signed graphs can be analyzed in a very similar way.
By interpreting the edge signs of a graph as an edge-coloring, we can use Fox's \emph{colored} triangle removal lemma \cite{Fox}.
This lemma states that if an edge-colored graph is $\epsilon$-far from being free of a certain colored triangle, then the graph contains at least $\delta'(\epsilon) \, \binom{n}{3}$ such induced triangles, where $\delta'(\epsilon)$ is again bounded by the inverse of $\mathrm{tower}(\log(1/\epsilon))$.
By the same argument as in the unsigned case, this implies the existence of a one-sided tester for colored (or signed) triangle freeness in the dense graph model with query complexity $O(\mathrm{tower}(\log(1/\epsilon)))$.
This proves the following theorem.
\begin{theorem}
There exists a one-sided tester for signed triangle freeness in the dense signed graph model with query complexity $\tO(\mathrm{tower}(\log(1/\epsilon)))$.
\end{theorem}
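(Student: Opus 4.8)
The plan is to implement the canonical one-sided triangle-freeness tester, adapted to the signed (equivalently, edge-colored) setting, and to quantify its soundness via Fox's colored triangle removal lemma. The genuine content sits entirely inside that lemma; our job is to set up the right reduction and do the routine sampling bookkeeping.

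First I would reformulate the problem as a statement about colorings of the complete graph. View the signed adjacency matrix of $G$ as a coloring of the pairs of $K_N$ with three colors $\{0,+,-\}$, where color $0$ marks a non-edge. A query $(v,w)$ returns the color of the pair $\{v,w\}$. The fixed forbidden signed triangle $T$ is a coloring of $K_3$ using only colors in $\{+,-\}$, and $G$ is signed-triangle-free precisely when no triple $\{a,b,c\}\subseteq[N]$ induces a $3$-coloring isomorphic to $T$. In this picture every edge modification allowed in the dense signed graph model (an addition, a removal, or a sign switch) is exactly a recoloring of one pair, so ``$\epsilon$-far from signed-triangle-free'' coincides with ``at least $\epsilon N^2$ pairs must be recolored to destroy every copy of $T$'', which is the hypothesis of the colored triangle removal lemma \cite{Fox}.

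The tester itself: sample $m=\Theta(1/\delta'(\epsilon))$ triples of distinct vertices uniformly at random, query the three pairs inside each triple, and reject iff some queried triple induces a copy of $T$ (checking all at most $6$ orderings); otherwise accept. This costs $3m=\tO(\mathrm{tower}(\log(1/\epsilon)))$ queries. One-sidedness is immediate: a signed-triangle-free $G$ has no copy of $T$, so the tester never rejects it. For soundness, assume $G$ is $\epsilon$-far from signed-triangle-free; applying the colored triangle removal lemma to the $3$-coloring above yields at least $\delta'(\epsilon)\binom{N}{3}$ copies of $T$ in $G$, with $1/\delta'(\epsilon)\le\mathrm{tower}(O(\log(1/\epsilon)))$. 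Hence a uniformly random triple induces a copy of $T$ with probability at least $\delta'(\epsilon)$, and the probability that none of the $m$ independent triples does is at most $(1-\delta'(\epsilon))^m\le 1/3$ for an appropriate choice of the constant hidden in $m$. The negligible discrepancies between sampling triples with or without replacement, and ordered versus unordered, only affect constants.

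I do not expect a serious obstacle: the whole argument is the Goldreich–Goldwasser–Ron triangle-freeness template with the removal lemma replaced by its colored analogue. The one point deserving care is the reduction in the second paragraph — confirming that the dense-model distance on signed graphs is literally the recoloring distance on $K_N$, so Fox's lemma applies verbatim, and that the statement concerns a single fixed triangle $T$ rather than a family of forbidden signed triangles, so that no induced or multicolored variant of the removal lemma (with its far weaker, wowzer/Ackermann-type bounds) is needed. Granting Fox's colored triangle removal lemma as a black box, as the excerpt does, everything else is routine.
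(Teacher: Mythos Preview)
Your proposal is correct and follows essentially the same approach as the paper: interpret the signed graph as an edge-colored graph, invoke Fox's colored triangle removal lemma to get $\delta'(\epsilon)\binom{N}{3}$ copies of the forbidden triangle whenever $G$ is $\epsilon$-far, and run the canonical random-triple sampler. Your write-up is in fact more careful than the paper's, making explicit the three-color encoding on $K_N$ (so that the signed-model distance matches the recoloring distance) and the routine sampling analysis.
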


\subsection{Balance}
We can cast balance or 2-clusterability of a signed graph $G = (V,E,\sigma)$ as a satisfiability problem.
Associate with each node $v$ a variable $x_v \in \{0,1\}$.
With every edge $(u,v) \in E$ we associate a constraint on $x_u$ and $x_v$: if $\sigma(e) = +$ (positive edge) the constraint is satisfied iff $x_u = x_v$; if $\sigma(e) = -$ (negative edge) then the constraint is satisfied iff $x_u \neq x_v$.
The graph $G$ will be balanced iff there exists an assignment of $x_v$'s such that all constraints are satisfied.
Even more, if $G$ is $\epsilon$-far from being balanced then we similarly have to remove $\epsilon n^2$ constraints from the satisfiability problem for it to satisfiable.
As a consequence, the problem reduces to testing whether the satisfiability problem is in fact satisfiable.
For this we can use the work by Sohler \cite{Sohlercannon} which describes a one-sided tester with query complexity $\tO(1/\epsilon)$.
The algorithm is very simple: sample $\tO(1/\epsilon)$ variables and accept, if and only if the induced set of constraints on those variables has a satisfying assignment.
Applying this algorithm to the problem of testing balance gives the following algorithm: sample $\tO(1/\epsilon)$ nodes, query the entire induced subgraph, and accept if and only if the induced subgraph is balanced.
From \cite[Theorem 1]{Sohlercannon} it then follows that this describes a one-sided property tester for balance.
This proves the following theorem.
\begin{theorem}
There exists a one-sided tester for balance in the dense signed graph model with query complexity $\tO(1/\epsilon)$.
\end{theorem}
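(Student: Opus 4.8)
The plan is to reduce testing balance to testing satisfiability of a binary constraint satisfaction problem and then invoke the CSP tester of Sohler \cite{Sohlercannon}. Concretely, given a signed graph $G=(V,E,\sigma)$ on $N$ nodes, I introduce a Boolean variable $x_v$ for every $v\in V$ and, for each edge $e=(u,v)$, the constraint $x_u=x_v$ if $\sigma(e)=+$ and $x_u\neq x_v$ if $\sigma(e)=-$. By the characterization of balance (the two clusters being $\{v:x_v=0\}$ and $\{v:x_v=1\}$), $G$ is balanced if and only if this system is simultaneously satisfiable; so on the ``yes'' side there is nothing further to check, and this is what will give a \emph{one-sided} tester.

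Next I would transfer the distance guarantee. Since balance is monotone, if $G$ is $\epsilon$-far from balanced then already edge \emph{deletions} alone require removing at least $\epsilon N^2$ edges to reach a balanced graph, and deleting an edge is exactly deleting the corresponding constraint; as there are at most $\binom{N}{2}<N^2/2$ possible constraints, the induced CSP instance is $\Omega(\epsilon)$-far from satisfiable in the normalization of \cite{Sohlercannon}. Applying \cite[Theorem~1]{Sohlercannon} then yields the algorithm in its graph form: sample $\tO(1/\epsilon)$ nodes (the sample size chosen as in \cite{Sohlercannon} so that querying all pairs among them costs $\tO(1/\epsilon)$ adjacency queries), read the induced signed subgraph, and accept if and only if that subgraph is balanced. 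Deciding balance of the sampled subgraph is cheap, e.g.\ a linear-time spanning-forest parity check, so the time complexity is at most exponential (in fact polynomial) in the query complexity. Correctness on the ``far'' side is precisely the soundness guarantee of \cite{Sohlercannon}, and one-sidedness is immediate because every induced subgraph of a balanced graph is balanced.

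The main obstacle I anticipate is not the graph theory, which is routine, but checking that the cited CSP tester applies verbatim: one must confirm that Sohler's framework covers the Boolean domain with the symmetric binary predicates ``equal''/``not equal'' (it does, as these are a special case of the partition-type CSPs handled there), and that the notion of ``$\epsilon$-far'' used in \cite{Sohlercannon} — a fraction of constraints, or of variable pairs — matches ours up to a constant factor, which is exactly where the $\binom{N}{2}$-versus-$N^2$ bookkeeping of the previous paragraph is needed. Once these points are pinned down, the query complexity $\tO(1/\epsilon)$ and one-sidedness follow directly, proving the theorem.
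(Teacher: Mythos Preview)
Your proposal is correct and follows essentially the same approach as the paper: cast balance as a Boolean CSP with equality/inequality constraints, observe that $\epsilon$-farness from balance translates to $\Omega(\epsilon)$-farness of the CSP from satisfiability, and invoke Sohler's one-sided CSP tester \cite[Theorem~1]{Sohlercannon} by sampling $\tO(1/\epsilon)$ nodes and checking balance of the induced subgraph. Your extra care about the distance normalization and the explicit remark that induced subgraphs of balanced graphs are balanced are consistent with (and slightly more detailed than) the paper's own argument.
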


We note that testing $k$-clusterability can be reduced to satisfiability in the very same manner, except that now the variables $x_v \in \{0,1,\dots,k-1\}$.
For constant $k$ we can again use \cite[Theorem 1]{Sohlercannon} to get a one-sided tester with query complexity $\tO(1/\epsilon)$.

\subsection{Clusterability}
A relaxation of $k$-clusterability for signed graphs is the notion of \emph{weak balance} or \emph{clusterability} \cite{davis1967clustering,Bansal}. 
A signed graph is clusterable if it is $k$-clusterable for some (a priori unknown) $k \in [N]$.
Since there can only be $N$ clusters, we could test clusterability by testing $N$-clusterability.
The satisfiability reduction from last section however fails in such case, because typical satisfiability testers have a bad dependence on the domain size of the variables.

Rather, we argue that testing clusterability can be reduced to \textit{tolerantly} testing $k$-clusterability for $k \in O(1/\epsilon)$.
A tolerant tester \cite{PARNAS20061012} is required to accept inputs that are $\epsilon_1$-close to some property $\mathcal{P}$, while rejecting inputs that are  $\epsilon_2$-far from $\mathcal{P}$, for some parameters $\epsilon_1<\epsilon_2$.
Tolerant testing is closely related to approximating the distance from an object to a property (see \cite{PARNAS20061012}).
We use the following lemma.

\begin{lemma} \label{lem:cluster}
If a signed graph is clusterable then it is $4\epsilon$-close to being clusterable into at most $1/\epsilon$ clusters.
\end{lemma}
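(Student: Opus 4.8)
The plan is to take a clustering that witnesses that $G$ is clusterable and merge its clusters into at most $1/\epsilon$ groups, deleting only the few edges that this merging turns into violations. Concretely, fix a partition $C_1,\dots,C_k$ of $V$ such that every positive edge lies inside some $C_i$ and every negative edge runs between two distinct $C_i$'s (such a partition exists since $G$ is clusterable), and let $c(v)\in[k]$ denote the index with $v\in C_{c(v)}$. We may assume $\epsilon\le 1/2$ (otherwise $4\epsilon N^2$ exceeds the number of adjacency matrix entries and the statement is trivial), and set $m=\lfloor 1/\epsilon\rfloor$; if $k\le m$ there is nothing to prove, so assume $k>m$.

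For a map $\pi\colon[k]\to[m]$, let $G_\pi$ be $G$ with every negative edge $(u,v)$ satisfying $\pi(c(u))=\pi(c(v))$ deleted. First I would check that $G_\pi$ is clusterable into at most $m$ clusters via the explicit partition $D_z=\bigcup_{i:\pi(i)=z}C_i$ for $z\in[m]$ (discarding empty parts). Indeed, a positive edge of $G_\pi$ is a positive edge of $G$, hence lies inside a single $C_i\subseteq D_{\pi(i)}$; and a negative edge of $G_\pi$ runs between some $C_i$ and $C_j$ with $\pi(i)\ne\pi(j)$, hence between the distinct parts $D_{\pi(i)}$ and $D_{\pi(j)}$. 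The key point is that merging clusters can only turn a negative inter-cluster edge into an internal edge and never creates a positive-edge violation, so the only deletions needed are of negative edges, and their number is exactly $|\{(u,v)\text{ negative}:\pi(c(u))=\pi(c(v))\}|$.

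It then remains to exhibit a $\pi$ for which this count is small, which I would do by the probabilistic method: let $\pi(1),\dots,\pi(k)$ be independent and uniform in $[m]$. Since every negative edge of $G$ joins two vertices lying in distinct clusters, each negative edge survives into the ``bad'' set with probability exactly $1/m$, so the expected number of deletions is at most $\binom{N}{2}/m$. Using $m=\lfloor 1/\epsilon\rfloor> 1/\epsilon-1\ge 1/(2\epsilon)$ for $\epsilon\le 1/2$, this is at most $\epsilon N^2$. Hence some $\pi$ achieves at most $\epsilon N^2$ edge deletions — well within the claimed $4\epsilon N^2$ modifications even after accounting for both orientations in the signed adjacency matrix — and for that $\pi$ the graph $G_\pi$ is clusterable into at most $m\le 1/\epsilon$ clusters while differing from $G$ only in deleted edges. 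This proves the lemma. I do not anticipate a serious obstacle: the only points that need care are the verification that $D_1,\dots,D_m$ is a genuine clustering of $G_\pi$ (so that no positive-edge violations sneak in) and the observation that two distinct clusters collide under a uniform random $\pi$ with probability exactly $1/m$.
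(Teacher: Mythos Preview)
Your argument is correct, but it follows a genuinely different route from the paper's. The paper merges clusters \emph{deterministically by size}: it keeps every cluster $P_i$ with $|P_i|\ge \epsilon N$ intact and greedily bins the remaining small clusters into groups whose sizes lie in $[\epsilon N,2\epsilon N]$. This yields at most $1/\epsilon$ parts, and the only possible violations are edges inside the newly formed bins; since there are at most $1/\epsilon$ bins, each of size at most $2\epsilon N$, the total number of such edges is at most $(1/\epsilon)(2\epsilon N)^2=4\epsilon N^2$, which are then deleted.

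Your approach instead picks a uniformly random map $\pi:[k]\to[m]$ and observes that each negative edge becomes internal with probability $1/m$, so linearity of expectation gives at most $\binom{N}{2}/m<\epsilon N^2$ deletions for some $\pi$. This is a cleaner probabilistic argument, avoids the mild bin-packing step (``merge the remaining components into components of size between $\epsilon N$ and $2\epsilon N$'', which needs a word when the total mass of small clusters is itself below $\epsilon N$), and in fact yields a better constant. What the paper's approach buys is an explicit, deterministic construction of the coarsened partition, whereas yours is existential; for the purposes of the lemma as stated, either suffices.
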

\begin{proof}
Let the partition $V = P_1 \cup P_2 \cup \dots \cup P_r$ denote a valid clustering of the graph.
We define a new partition by merging different components: keep all components $P_i$ of size $|P_i| \geq \epsilon N$, and merge the remaining components into components of size between $\epsilon N$ and $2 \epsilon N$ (which is always possible).
This yields a new partition with at most $1/\epsilon$ components.
Between the components there are only negative edges, and there are at most $(2\epsilon N)^2/\epsilon = 4 \epsilon N^2$ edges within the components.
Hence if we remove all the edges within the new components, then we obtain a new graph for which the new partition describes a clustering with at most $1/\epsilon$ clusters, and which is $(4\epsilon)$-far from the original graph.
\end{proof}

Now if a signed graph is $\epsilon$-far from being clusterable, then clearly it is also $\epsilon$-far from being say $(8/\epsilon)$-clusterable.
On the other hand, by this lemma, a graph that is $\epsilon/4$-close to being clusterable will be $(\epsilon/4 + \epsilon/2)=3\epsilon/4$-close to a graph that is $(8/\epsilon)$-clusterable.
Hence we can use a \textit{tolerant tester} for $O(1/\epsilon)$-clusterability to tolerantly test clusterability.
Equivalently, we can use an additive estimate (with error $\pm\epsilon N^2$) on the number of edges that need to be removed in order to make a signed graph $k$-clusterable, for $k \in O(1/\epsilon)$.
Now we are in better shape to cast the problem as a satisfiability problem, similar to last section.
In \cref{app:dense-cluster} we detail how to use the algorithm of Andersson and Engebretsen \cite{andersson2002property} to tolerantly test for $O(1/\epsilon)$-clusterability using $\tO(1/\epsilon^7)$ queries.
This yields the following theorem.
\begin{theorem}
There exists a two-sided tolerant tester for clusterability in the dense signed graph model with query complexity $\tO(1/\epsilon^7)$.
\end{theorem}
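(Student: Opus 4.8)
The plan is to reduce testing clusterability to tolerantly testing $k$-clusterability for $k = O(1/\epsilon)$, using \cref{lem:cluster}, and then to instantiate the tolerant tester via the constraint-satisfaction framework of Andersson and Engebretsen \cite{andersson2002property}. First I would set up the reduction precisely: fix $k = 8/\epsilon$, and observe two things. On the one hand, if $G$ is $\epsilon$-far from clusterable then it is $\epsilon$-far from being $k$-clusterable (since $k$-clusterability is a special case of clusterability). On the other hand, if $G$ is clusterable (indeed, even if $G$ is merely $(\epsilon/4)$-close to clusterable), then by \cref{lem:cluster} it is $(\epsilon/4 + \epsilon/2) = 3\epsilon/4$-close to being $k$-clusterable. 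So a tolerant tester for $k$-clusterability that accepts inputs $(3\epsilon/4)$-close and rejects inputs $\epsilon$-far will, when composed with this reduction, yield a tolerant tester for clusterability with the claimed guarantees. Equivalently, it suffices to produce an additive $\pm(\epsilon/8)N^2$ estimate of the minimum number of edge modifications needed to make $G$ into a $k$-clusterable graph.

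Next I would translate $k$-clusterability into a constraint satisfaction problem over an alphabet of size $k$: one variable $x_v \in [k]$ per node, and for each edge $(u,v)$ a constraint that is satisfied iff $x_u = x_v$ (positive edge) or $x_u \neq x_v$ (negative edge). The minimum number of edge modifications to make $G$ be $k$-clusterable equals the minimum number of violated constraints over all assignments — i.e., $N^2 \cdot (1 - \mathrm{Opt})$ in the normalization of \cite{andersson2002property}, where $\mathrm{Opt}$ is the maximum fraction of satisfiable constraints in the natural dense CSP encoding (padding with trivially-satisfied/empty constraints so the instance has exactly $\binom{N}{2}$ or $N^2$ constraints). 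I would then invoke the Andersson–Engebretsen result, which gives a polynomial-time sampling-based algorithm that, from a random sample of $O(\mathrm{poly}(k/\delta))$ variables and the induced constraints, outputs an estimate of $\mathrm{Opt}$ accurate to within an additive $\delta$ with probability $2/3$. Setting $\delta = \Theta(\epsilon)$ and $k = \Theta(1/\epsilon)$, the sample size becomes $\mathrm{poly}(1/\epsilon)$, and querying the full induced subgraph on the sampled nodes costs the square of that; I would track the exact exponents to land at $\tO(1/\epsilon^7)$ as claimed (this bookkeeping is the routine part, and the precise constant in the exponent is where the details of \cite{andersson2002property} enter — it is relegated to \cref{app:dense-cluster}).

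The final step is to combine the error estimate with the reduction to decide: accept iff the estimated minimum number of modifications to reach $k$-clusterability is at most $(7\epsilon/8)N^2$. If $G$ is clusterable, the true value is at most $(3\epsilon/4)N^2$, so the estimate is at most $(3\epsilon/4 + \epsilon/8)N^2 < (7\epsilon/8)N^2$ and we accept; if $G$ is $\epsilon$-far from clusterable, the true value is at least $\epsilon N^2$, so the estimate is at least $(7\epsilon/8)N^2$ and we reject. Because the underlying CSP distance estimator is two-sided, the resulting clusterability tester is two-sided, matching the theorem statement.

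The main obstacle I anticipate is not the reduction — which is clean — but verifying that the CSP tolerant-testing machinery of \cite{andersson2002property} applies with an alphabet size $k$ that grows as $1/\epsilon$ while still keeping the sample complexity $\mathrm{poly}(1/\epsilon)$; one has to check that the dependence of their sample size on the alphabet size is polynomial (not, say, exponential) and that their additive-error guarantee is uniform over instances, so that the $k = \Theta(1/\epsilon)$ and $\delta = \Theta(\epsilon)$ substitution is legitimate. A secondary subtlety is matching the normalization: the CSP framework measures distance as a fraction of the number of constraints, whereas dense-graph distance is a fraction of $N^2$, so I would pad the CSP instance to have $\Theta(N^2)$ constraints and confirm the two notions of "$\epsilon$-far" coincide up to constants that can be absorbed into the choice of $\delta$.
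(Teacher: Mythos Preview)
Your proposal is correct and follows essentially the same route as the paper: reduce via \cref{lem:cluster} to tolerantly testing $k$-clusterability for $k=\Theta(1/\epsilon)$, encode that as a 2-ary CSP over alphabet $[k]$, and invoke Andersson--Engebretsen to estimate the optimum to additive error $\Theta(\epsilon)N^2$. Your anticipated obstacle about alphabet-size dependence in fact dissolves: in the formulation used in \cref{app:dense-cluster} the query bound from \cite{andersson2002property} depends only on $|\mathcal{F}|$, $\Sigma$, $\ell$, and $\epsilon$ (not on $|D|=k$), so with $|\mathcal{F}|=2$, $\Sigma=1$, $\ell=2$ one lands directly at $\tO(1/\epsilon^7)$ queries (the $k$-dependence appears only in the running time).
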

\noindent
Since the tester is tolerant, this also gives an algorithm to estimate the weak frustration index with additive error $\epsilon N^2$ using $\tO(1/\epsilon^7)$ queries.

%%%%%%%%%%%%%%%%%%%%%%%%%%%%%%%%%%%%%%%%%%%%%%%%%%%%
\section{Property testing in the bounded-degree signed graph model}
%%%%%%%%%%%%%%%%%%%%%%%%%%%%%%%%%%%%%%%%%%%%%%%%%%%%

\subsection{Signed triangle freeness}

Testing triangle freeness in the bounded-degree (unsigned) graph model is significantly easier to analyze than doing so for the dense model \cite{goldreich2010introduction}.
In fact, the exact same argument applies to signed graphs and we will describe it here for completeness.
Given query access to a signed graph, consider the following testing algorithm: pick $\tO(1/\epsilon)$ nodes and reject if any of them is part of a signed triangle.
We can check whether a node is part of a signed triangle simply by querying for its neighbors, and the neighbors of its neighbors.
In the bounded-degree model this takes only $\tO(1)$ queries.
If the graph is signed triangle free then we will never reject.
On the other hand, if at least $\epsilon d N$ edges have to removed in order to make the graph signed triangle free, then at least $\epsilon N$ nodes must be part of a signed triangle.
With large probability the algorithm will sample such a node and consequently reject the graph.
This yields the theorem below.
\begin{theorem}
There exists a one-sided tester for signed triangle freeness in the bounded-degree signed graph model with query complexity $O(1/\epsilon)$.
\end{theorem}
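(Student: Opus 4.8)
The plan is to analyze the simple tester already sketched above: draw a set $S$ of $\Theta(1/\epsilon)$ nodes independently and uniformly at random, and for each $v \in S$ query all (at most $d$) neighbors of $v$ and all of their neighbors, then reject if and only if some $v \in S$ together with two of its neighbors induces a copy of the prescribed forbidden signed triangle. Reconstructing the $2$-neighborhood of a node costs $O(d^2)=O(1)$ queries in the bounded-degree model, and scanning the $O(d^2)$ pairs of neighbors for the forbidden colored triangle costs $O(1)$ time, so the whole procedure uses $O(1/\epsilon)$ queries and $O(1/\epsilon)$ time; this disposes of the complexity claim immediately.

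Next I would establish one-sidedness. If $G$ is signed triangle free, then no node lies on a copy of the forbidden triangle, so every local check performed by the algorithm fails to find one and the algorithm accepts with probability $1$. Hence the tester never errs on yes-instances.

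For soundness I would argue the contrapositive of the standard ``farness implies local abundance'' step. Call a node \emph{bad} if it participates in at least one copy of the forbidden signed triangle. If fewer than $\epsilon N$ nodes are bad, then deleting every edge incident to a bad node removes every copy of the forbidden triangle (each copy has a bad endpoint), at a cost of at most $(\epsilon N)\cdot d = \epsilon d N$ edge modifications, so $G$ would be $\epsilon$-close to signed triangle free. Therefore, if $G$ is $\epsilon$-far from signed triangle free, at least $\epsilon N$ nodes are bad, a uniformly random node is bad with probability at least $\epsilon$, and sampling $\Theta(1/\epsilon)$ nodes misses all bad nodes with probability at most $(1-\epsilon)^{\Theta(1/\epsilon)} \le 1/3$. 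Whenever a bad node is sampled, its $2$-neighborhood contains a forbidden triangle and the local check certifies it, so the algorithm rejects with probability at least $2/3$.

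I do not expect a genuine obstacle: the only nontrivial point is the farness-to-local-abundance argument, and this is exactly the argument used for triangle freeness in the bounded-degree model \cite{goldreich2010introduction}, carried over verbatim. Edge signs play no role in the counting, since the forbidden signed triangle is merely a fixed colored triangle and deleting all edges incident to a bad node destroys it irrespective of the signs; the signs only enter in the (still $O(1)$-time) local check that a given triple realizes the \emph{prescribed} colored pattern rather than some other colored triangle.
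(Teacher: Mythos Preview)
Your proposal is correct and follows essentially the same approach as the paper: sample $\Theta(1/\epsilon)$ random nodes, explore each $2$-neighborhood with $O(d^2)=O(1)$ queries, and reject if a copy of the forbidden signed triangle is found; one-sidedness is immediate, and soundness follows from the observation that if fewer than $\epsilon N$ nodes lie on a forbidden triangle then deleting all their incident edges (fewer than $\epsilon d N$ modifications) destroys every copy. Your write-up in fact spells out the contrapositive more explicitly than the paper does, but the argument is the same.
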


%%%%%%%%%%%%%%%%%%%%%%%%%%%%%%%%%%%%%%%%%%%%%%%%%%%%
\subsection{Balance}
%%%%%%%%%%%%%%%%%%%%%%%%%%%%%%%%%%%%%%%%%%%%%%%%%%%%

Our algorithm for testing balance of bounded-degree signed graphs reduces the problem to testing bipartiteness in a related unsigned graph\footnote{The reduction is reminiscent of the reduction from cycle-freeness testing to bipartiteness testing in \cite{czumaj2014finding}. However, the reduction in \cite{czumaj2014finding} is randomized and between unsigned graphs, whereas our reduction is deterministic and for signed graphs.}.
Consider the following mapping from a signed graph $G$ to an unsigned graph $G'$: (i) for every positive edge $(u,v)$ create a new node $w^{(u,v)}$ and replace the edge $(u,v)$ by two unsigned edges $(u,w^{(u,v)})$ and $(w^{(u,v)},v)$, and (ii) replace each of the remaining negative edges by an unsigned edge.
The unsigned graph $G'$ has an odd cycle if and only if $G$ has a cycle with an odd number of negative edges.
As a consequence, $G'$ will be bipartite if and only if $G$ was balanced.

\begin{figure}[htb]
\centering
\includegraphics[width=.85\textwidth]{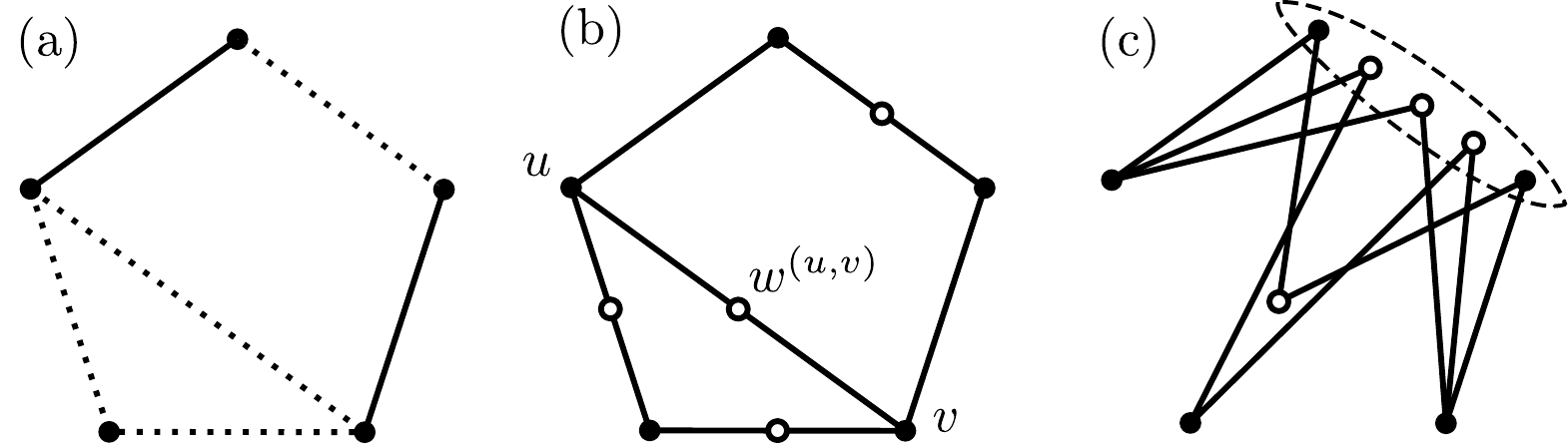}
\caption{(a) Balanced signed graph $G$ (dotted lines are positive edges, solid lines are negative edges). (b) Bipartite unsigned graph $G'$ after mapping. (c) Bipartition of $G'$.}
\label{fig:balance-reduction}
\end{figure}

In fact, an even stronger property holds \cite[Proposition 2.2]{ZASLAVSKY201831}: the (signed) frustration index of $G$ is equal to the (unsigned) frustration index of $G'$.
This implies the following lemma.
\begin{lemma}
If $G$ is $\epsilon$-far from balanced then $G'$ is $\epsilon/(d+1)$-far from bipartite.
\end{lemma}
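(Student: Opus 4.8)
The plan is to reduce the statement to the frustration-index identity of Zaslavsky already cited above, and then to carefully track how the bounded-degree normalization changes when passing from $G$ to $G'$. First I would record that both balance (for signed graphs) and bipartiteness (for unsigned graphs) are monotone under edge deletion, as noted in the introduction, so the relevant distances are realized by edge removals alone. Write $N = |V(G)|$ and $N' = |V(G')|$, let $d$ be the degree bound, and let $\mathrm{fi}(\cdot)$ denote the frustration index (minimum number of edges whose removal makes a signed graph balanced, resp.\ an unsigned graph bipartite). Then $G$ being $\epsilon$-far from balanced means $\mathrm{fi}(G) \ge \epsilon N d$, and $G'$ being $\epsilon'$-far from bipartite means $\mathrm{fi}(G') \ge \epsilon' N' d$. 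By \cite[Proposition 2.2]{ZASLAVSKY201831} we have $\mathrm{fi}(G) = \mathrm{fi}(G')$, so it suffices to show that $\epsilon N d \ge \tfrac{\epsilon}{d+1} N' d$, i.e.\ that $N' \le (d+1)N$.

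Next I would bound the vertex blow-up. The transformation adds exactly one new vertex per positive edge of $G$ and keeps the original vertices, so $N' = N + |E^+| \le N + |E|$, where $E^+ \subseteq E$ is the set of positive edges. Since $G$ has degree bound $d$ we have $|E| \le Nd/2$, hence $N' \le N + Nd/2 \le (d+1)N$, as required. I would also verify that $G'$ still has degree bound $d$: each original vertex keeps exactly as many incident edges as before (a positive edge $(u,v)$ is replaced by $(u,w^{(u,v)})$ and $(w^{(u,v)},v)$, still contributing one incident edge to each of $u$ and $v$), while each new vertex $w^{(u,v)}$ has degree $2$; so for $d \ge 2$ the degree bound is preserved, and the case $d = 1$ is vacuous since a signed graph of maximum degree $1$ is a matching and hence balanced.

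Finally I would assemble the pieces: $\mathrm{fi}(G') = \mathrm{fi}(G) \ge \epsilon N d \ge \tfrac{\epsilon}{d+1} N' d$, where the last inequality is $N \ge N'/(d+1)$ from the preceding paragraph. Since bipartiteness is monotone, this says precisely that $G'$ (a bounded-degree graph on $N'$ vertices with degree bound $d$) is $\tfrac{\epsilon}{d+1}$-far from bipartite, which is the claim.

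I do not expect a real obstacle here: the mathematical content is entirely in Zaslavsky's frustration-index identity, which we may invoke, so the only thing needing care is the bookkeeping — correctly charging the increase $N \to N'$ in the number of vertices (which is where the factor $d+1$ is spent) and confirming the degree bound does not grow. If a sharper constant were desired one could keep $N' \le N(1 + d/2)$ and conclude the slightly stronger ``$\tfrac{2\epsilon}{d+2}$-far'', but $\epsilon/(d+1)$ is cleaner and more than enough for the reduction to Goldreich--Ron's bipartiteness tester.
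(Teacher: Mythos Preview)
Your proposal is correct and follows essentially the same route as the paper: invoke Zaslavsky's frustration-index identity $\mathrm{fi}(G)=\mathrm{fi}(G')$, bound the vertex blow-up by $N' = N + |E^+| \le (d+1)N$, note that the degree bound is preserved, and conclude $\mathrm{fi}(G') \ge \epsilon d N \ge \tfrac{\epsilon}{d+1}\, d N'$. Your treatment is in fact slightly more careful (you use $|E|\le Nd/2$ and explicitly dispose of the $d=1$ case), but the argument is the same.
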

\begin{proof}
For the second fact, let $G$ be $\epsilon$-far from being balanced, so that it has signed frustration index $k \geq \epsilon d N$.
The unsigned graph $G'$ then has unsigned frustration index $k \geq \epsilon d N$.
Now if $G$ has $m^+$ positive edges, then $G'$ has exactly $N+m^+ \leq (d+1) N$ vertices while keeping the same degree bound $d$.
As a consequence, we can bound its frustration index $k \geq \frac{\epsilon}{d+1} (d+1) d N \geq \frac{\epsilon}{d+1} d (N+m^+)$, so that $G'$ is indeed $\frac{\epsilon}{d+1}$-far from being bipartite.
\end{proof}

This lemma proves that we can test $\epsilon$-balancedness of $G$ by testing $\epsilon/(d+1)$-bipartiteness of $G'$.
For this we can use the following algorithm, which uses random walks to find odd cycles.
It was proven to be a one-sided bipartiteness tester in the bounded-degree model by Goldreich and Ron \cite{goldreich1999sublinear}.

\begin{algorithm}[H]
\caption{Bipartiteness tester} \label{alg:bounded-bipartite}
\begin{algorithmic}[1]
\For{$O(1/\epsilon)$ times}
\State
Pick a node $v$ uniformly at random.
\State
Perform $\tO(\sqrt{N}/\epsilon^3)$ random walks\footnotemark{} starting from $v$, each of length $\tO(1/\epsilon^8)$.
\State
If some vertex $u$ is reached both after an even path and after an odd path then reject.
\EndFor
\end{algorithmic}
\end{algorithm}
\footnotetext{A single step of the random walk from a node $v$ (with degree $d(v)$) corresponds to the following process: with probability $d(v)/d$ move to a uniformly random neighbor, and otherwise stay at $v$.}

\noindent
It remains to prove that we can efficiently implement this tester on $G'$.

\begin{lemma}
It is possible to implement \cref{alg:bounded-bipartite} on $G'$ using $\mathrm{poly}(\log(N)/\epsilon) \sqrt{N}$ adjacency list queries to $G$.
\end{lemma}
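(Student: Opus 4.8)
The plan is to show that the two primitive operations used by \cref{alg:bounded-bipartite} when run on $G'$ — drawing a (uniformly random) vertex of $G'$, and taking one lazy random-walk step — can each be carried out with $O(d)$ adjacency-list queries to $G$, and then to multiply this through the parameters of the tester. Recall from the previous lemma that to test $\epsilon$-balance of $G$ it suffices to run \cref{alg:bounded-bipartite} on $G'$ with distance parameter $\epsilon' = \epsilon/(d+1)$, and that $G'$ has $N' = N + m^+ \le (d+1)N$ vertices and degree bound $d$.

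The \emph{data structure} is the obvious one: a vertex of $G'$ is stored either as an original vertex $v\in[N]$ or as an edge vertex, recorded as the positive edge $(u,v)$ it came from. An original vertex $v$ has the same degree in $G'$ as in $G$, and its $G'$-neighbours are recovered from its $G$-neighbours by the rule ``a negative edge $(v,u)$ yields the neighbour $u$; a positive edge $(v,u)$ yields the neighbour $w^{(v,u)}$''; an edge vertex has degree exactly $2$ and its two neighbours are read off from its label with no queries at all. Hence one lazy step from an original vertex $v$ costs $O(d)$ queries — probe the slots $(v,\cdot)$ to learn $\deg_G(v)$, then with probability $\deg_G(v)/d$ query a single uniformly random slot and move to the indicated original vertex or edge vertex according to the sign, otherwise stay — while a lazy step from an edge vertex costs no queries at all. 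Along the walk we maintain the parity of the number of edges \emph{actually traversed} (non-lazy steps), which is exactly the parity of the walk's length in $G'$; to implement line~4 we store, per walk started from the current source, the pairs (visited vertex, parity) in a hash table and reject the moment some vertex occurs with both parities. This adds only $\tO(1)$ per step.

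For \emph{sampling a uniform vertex of $G'$}: pick $v\in[N]$ uniformly and probe all of $v$'s slots ($O(d)$ queries); with probability $1/(d+1)$ output the original vertex $v$, and otherwise pick a uniformly random slot $i\in[d]$ and, if slot $i$ holds a positive edge $(v,u)$ with $v<u$, output the edge vertex $w^{(v,u)}$, else declare failure and repeat. Then each original vertex is produced with probability exactly $\tfrac{1}{N(d+1)}$, and each edge vertex $w^{(u,v)}$ with $u<v$ is produced with probability exactly $\tfrac1N\cdot\tfrac{d}{d+1}\cdot\tfrac1d=\tfrac{1}{N(d+1)}$ — via its smaller endpoint only, which is what the tie-break guarantees — so conditioned on not failing the output is uniform on $V(G')$, and since the success probability is at least $1/(d+1)$ the expected number of repetitions is $O(d)$, i.e. $O(d^2)$ queries per sample. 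Putting the pieces together, \cref{alg:bounded-bipartite} on $G'$ with parameter $\epsilon'$ uses $O(1/\epsilon')$ source vertices, each driving $\tO(\sqrt{N'}/\epsilon'^{3})$ walks of length $\tO(1/\epsilon'^{8})$, for a total of $O(1/\epsilon')\cdot\tO(\sqrt{N'}/\epsilon'^{3})\cdot\tO(1/\epsilon'^{8})\cdot O(d)$ queries to $G$; substituting $N'\le(d+1)N$ and $\epsilon'=\epsilon/(d+1)$ and treating $d$ as a constant, this is $\mathrm{poly}(\log(N)/\epsilon)\,\sqrt{N}$.

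The one step that needs real thought is the vertex sampler: because the edge vertices of $G'$ are shared between two endpoints and their number $m^+$ is not known in advance, the naive ``random endpoint, then random incident positive edge'' rule is \emph{not} uniform, and the $v<u$ tie-break above is what repairs it at cost $O(d)$ per sample; an alternative would be to reopen the Goldreich--Ron analysis and argue that starting all walks from a uniformly random \emph{original} vertex already suffices, since the set of useful start vertices retains density $\Omega(\epsilon')$ within $[N]$, but the clean sampler avoids touching their proof. Everything else — the per-step simulation, the parity bookkeeping, and the collision detection — is routine.
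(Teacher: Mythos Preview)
Your proof is correct and follows the same overall plan as the paper: show that the two primitives needed by \cref{alg:bounded-bipartite} on $G'$ --- a lazy random-walk step and a uniform vertex sample --- can each be simulated with $O(d)$ queries to $G$, and then read off the total. The random-walk simulation is identical in both.

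The one genuine difference is the uniform-vertex sampler. The paper samples a uniform pair $(u,i)\in[N]\times[d]$, queries the $i$-th neighbour, and then accepts either an endpoint or the subdivision vertex with carefully tuned probabilities depending on the sign and on $d(u)$; success probability is $\Omega(1/d)$. Your sampler instead picks $v\in[N]$ uniformly, outputs $v$ with probability $1/(d+1)$, and otherwise outputs an edge vertex $w^{(v,u)}$ only when $v<u$ --- the tie-break ensuring each edge vertex is produced via exactly one endpoint. Both schemes land on the same target distribution and the same $O(d)$-per-sample cost; yours is arguably the more transparent of the two, and your closing remark correctly identifies the double-counting pitfall that the tie-break avoids. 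Your proposal also spells out the parity bookkeeping and the final query-count arithmetic, which the paper leaves implicit.
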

\begin{proof}
We need to be able to select a uniformly random node from $G'$, and implement a random walk on $G'$.
The latter is easy:
\begin{itemize}
\item
If we are on an original node $u$ in $G'$ then pick a random neighbor $v$ of $u$ in $G$.
If $(u,v)$ is negative, go to $v$, otherwise go to the new node indexed $w^{(u,v)}$.
\item
If we are on a new node $w^{(u,v)}$, go to either $u$ or $v$ with probability $1/2$.
\end{itemize}
To select a uniformly random node from $G'$, do the following:
\begin{enumerate}
\item
Pick $(u,i) \in [N] \times [d]$ uniformly at random and query for the $i$-th neighbor $v$ of $u$ in $G$.
If $u$ has less than $i$ neighbors we reject.
% Otherwise we found an edge $(v,w)$ with probability $1/(dN)$.
\item
If $\sigma(u,v) = -$, with probability $1/(4d(u))$ output a random endpoint of $(u,v)$ and terminate.
Otherwise, go to next step.
\item
If $\sigma(u,v) = +$, with probability $1/4$ output $w^{(u,v)}$ and terminate.
Otherwise, with probability $1/(3d(u))$ output a random endpoint of $(u,v)$.
\end{enumerate}
With probability $\geq 1/(4d)$, this scheme returns a uniformly random node from $G'$ (and otherwise it rejects).
To see this, first consider any original node $u$ in $G'$.
Any of its $d(u)$ incident edges is picked with an equal probability $1/(dN)$.
If a negative incident edge is picked, then $u$ is returned in step 2.~with probability $1/(4d(u))$; if it is a positive incident edge then $u$ is returned in step 3.~with probability $(1-1/4)/(3 d(u)) = 1/(4d(u))$.
Hence the total probability that $u$ is returned is
\[
d(u) \frac{1}{4 d(u)} \frac{1}{dN}
= \frac{1}{4 dN}.
\]
Now consider a new node $w^{(u,v)}$ in $G'$.
In step 1.~the edge $(u,v)$ is picked with probability $1/(dN)$, after which $w^{(u,v)}$ is returned with probability $1/4$ in step 3., yielding a total probability $1/(4 dN)$.
Since there are $N + m^- \geq N$ nodes in $G'$, the total probability of returning a node is $\geq N/(4 dN) = 1/(4d)$.
The sampling scheme only requires a single query, and so we can sample a uniformly random node from $G'$ using $4d \in O(1)$ queries in expectation.
By Chebyshev's inequality the total number of queries will be close to its expection with overwhelming probability.
\end{proof}

\noindent
This proves the following theorem.
\begin{theorem} \label{thm:bounded-balance}
There exists a one-sided tester for balance in the bounded-degree signed graph model with query complexity $\tO(\sqrt{N}/\mathrm{poly}(\epsilon))$.
\end{theorem}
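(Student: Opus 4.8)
The plan is to assemble the tester by composition: run \cref{alg:bounded-bipartite} on the auxiliary unsigned graph $G'$ with proximity parameter $\epsilon' := \epsilon/(d+1)$, answering every query to $G'$ (both the uniform-vertex sampling and each random-walk step) by the simulation described in the implementation lemma, and accept $G$ if and only if this run accepts. All three ingredients we need have already been established: (i) $G'$ is bipartite if and only if $G$ is balanced; (ii) if $G$ is $\epsilon$-far from balanced then $G'$ is $\epsilon'$-far from bipartite; and (iii) each query to $G'$ costs $O(d)$ queries to $G$, with the entire bipartiteness tester costing $\mathrm{poly}(\log(N)/\epsilon)\sqrt{N}$ queries to $G$.

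Next I would check the two correctness requirements. For one-sidedness: \cref{alg:bounded-bipartite} rejects only when it finds a vertex reached by both an even-length and an odd-length walk, which exhibits an odd closed walk and hence an odd cycle; a bipartite graph has none. So if $G$ is balanced then $G'$ is bipartite and the simulated tester accepts with probability $1$, preserving the one-sidedness of the Goldreich--Ron tester (the reduction being deterministic). For soundness: if $G$ is $\epsilon$-far from balanced then $G'$ is $\epsilon'$-far from bipartite, has at most $(d+1)N$ vertices, and has maximum degree at most $d$ (the subdivision vertices have degree $2$); by the guarantee of \cite{goldreich1999sublinear}, \cref{alg:bounded-bipartite} then rejects $G'$ with probability at least $2/3$, so the simulated tester rejects $G$ with probability at least $2/3$.

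Finally I would account for the complexity. Running \cref{alg:bounded-bipartite} on a graph with $N' \le (d+1)N$ vertices and proximity parameter $\epsilon' = \epsilon/(d+1)$ uses $\tO(\sqrt{N'}/\mathrm{poly}(\epsilon'))$ queries; substituting the bounds on $N'$ and $\epsilon'$ and folding the resulting $\mathrm{poly}(d)$ factor into $\tO(\cdot)$ gives $\tO(\sqrt{N}/\mathrm{poly}(\epsilon))$ queries to $G'$, which by the implementation lemma becomes $\tO(\sqrt{N}/\mathrm{poly}(\epsilon))$ queries to $G$, and the running time is of the same order. There is no substantive obstacle remaining here — the real work lives in the preceding two lemmas (the Zaslavsky-based distance preservation and the simulation of walks on $G'$) — so the only point that needs care is exactly this bookkeeping: verifying that passing from $\epsilon$ to $\epsilon/(d+1)$ and from $N$ to at most $(d+1)N$ introduces only $\mathrm{poly}(d)$ overhead, harmless since $d$ is treated as a constant hidden in the $\tO$-notation, and that the degree bound of $G'$ does not exceed that of $G$.
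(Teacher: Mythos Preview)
Your proposal is correct and follows exactly the paper's approach: the paper simply states ``This proves the following theorem'' after the two lemmas you cite, and your write-up spells out precisely the composition (run \cref{alg:bounded-bipartite} on $G'$ with parameter $\epsilon/(d+1)$ via the simulation lemma) together with the one-sidedness, soundness, and complexity bookkeeping that the paper leaves implicit.
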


Since balancedness of signed graphs generalizes bipartiteness of unsigned graphs, the $\widetilde\Omega(\sqrt{N})$ lower bound for testing bipartiteness in the (unsigned) bounded-degree model \cite[Theorem 7.1]{GoldreichBounded} also applies to testing balancedness in the (signed) bounded-degree model.
As a consequence, the $\sqrt{N}$-dependency of our tester is optimal.

%%%%%%%%%%%%%%%%%%%%%%%%%%%%%%%%%%%%%%%%%%%%%%%%%%%%
\subsection{Clusterability}
%%%%%%%%%%%%%%%%%%%%%%%%%%%%%%%%%%%%%%%%%%%%%%%%%%%%

In this section we prove the existence of a one-sided property tester for clusterability in the bounded-degree signed graph model.
We first note that similar to the dense case we can reduce the problem to testing $O(1/\epsilon)$-clusterability.
However, $k$-clusterability is a special case of $k$-colorability for unsigned graphs, and this is known not to be testable in the bounded-degree model \cite{bogdanov2002lower} (i.e., it requires $\Omega(N)$ queries).
Instead, we use the forbidden subgraph characterization of clusterability by Davis \cite{davis1967clustering}:
\begin{theorem}[{\cite[Theorem 1]{davis1967clustering}}] \label{thm:davis}
A signed graph $G$ is clusterable if and only if $G$ contains no cycle with exactly one negative edge.
\end{theorem}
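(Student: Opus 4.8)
The plan is to prove both implications directly, working with the positive subgraph $G^+ = (V, E^+)$, where $E^+ = \{e \in E : \sigma(e) = +\}$, and with the partition of $V$ induced by the connected components of $G^+$.

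For the forward direction I would argue by contradiction: assume $G$ is clusterable with witnessing partition $V = P_1 \cup \cdots \cup P_r$ and that $C$ is a cycle with exactly one negative edge $e = (u,v)$. Traversing $C$ from $u$ to $v$ while avoiding $e$ yields a path all of whose edges are positive, so by property~(i) of a valid clustering consecutive vertices of this path lie in a common part, whence $u$ and $v$ lie in a common $P_i$; but then the negative edge $(u,v)$ has both endpoints in $P_i$, violating property~(ii). Hence no such cycle can exist.

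For the converse, assume $G$ has no cycle with exactly one negative edge and take the partition of $V$ into the connected components $P_1, \dots, P_r$ of $G^+$ as the candidate clustering. Property~(i) holds trivially, since every positive edge joins two vertices in a common component of $G^+$. For property~(ii), suppose some negative edge $(u,v)$ has $u$ and $v$ in the same component $P_i$, and pick a shortest path $u = x_0, x_1, \dots, x_k = v$ in $G^+$. Since $G$ is simple there is no positive edge parallel to the negative edge $(u,v)$, so $k \ge 2$; as the $x_j$ are distinct, $x_0 x_1 \cdots x_k x_0$ is then a simple cycle of $G$ whose unique negative edge is $(u,v)$, contradicting the hypothesis. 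Therefore every negative edge joins two distinct components, the partition is a valid clustering, and $G$ is clusterable.

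I do not expect a substantial obstacle; the argument is short. The one point that needs care is the last step of the converse --- that the positive path from $u$ to $v$ closes up with the chord $(u,v)$ into a genuine simple cycle carrying exactly one negative edge. This is where simplicity of $G$ (no oppositely-signed parallel edges) is used, to guarantee the cycle has length at least $3$. In a multigraph formulation one would instead admit length-$2$ cycles in the statement, and the same argument carries over verbatim.
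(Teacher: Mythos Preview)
Your argument is correct and is essentially the standard proof of Davis's theorem. Note, however, that the paper does not supply its own proof of this statement: it is quoted as \cite[Theorem 1]{davis1967clustering} and used as a black box, so there is no paper proof to compare against. Your write-up matches Davis's original reasoning (clusterable $\Rightarrow$ no bad cycle by tracing cluster membership along the positive arc; no bad cycle $\Rightarrow$ clusterable by taking the connected components of $G^+$ as the clusters), and your remark about the simple-graph assumption versus length-$2$ cycles in the multigraph setting is apt.
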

We will call such a cycle a \emph{bad cycle}.
This characterization is a crucial distinction between clusterability and the untestable $k$-clusterability (or its unsigned variant, $k$-colorability), which does not seem to have such a simple characterization.

Similar to the bipartiteness tester of Goldreich and Ron \cite{goldreich1999sublinear} we will try to find bad cycles by performing many random walks in $G$.
Specifically, we simulate random walks on the unsigned subgraph $G^+ = (V,E^+)$ induced by the positive edges $E^+ = \{e \in E \mid \sigma(e) = + \}$.
Starting from a random initial node, we perform many such random walks and we check for the existence of a negative edge between distinct random walks.
Such a negative edge will necessarily yield a bad cycle, in which case we can safely reject the graph.

\begin{algorithm}[ht]
\caption{Tester for clusterability} \label{alg:weak-balance}
\begin{algorithmic}[1]
\For{$O(1/\epsilon)$ times}
\State
Pick a random node $s$ and run $\badcycle(s)$.
\State
If this returns a bad cycle, reject the graph.
\EndFor
\end{algorithmic}

$\badcycle(s)$:
\begin{algorithmic}[1]
\State
Perform $\tO(\sqrt{N}/\mathrm{poly}(\epsilon))$ random walks of length $\tO(1/\mathrm{poly}(\epsilon))$ on $G^+$, starting from $s$.
Let $K$ denote the set of all the nodes that are visited.
\State
If there is a negative edge between any pair of nodes in $K$, return the corresponding bad cycle.
\end{algorithmic}
\end{algorithm}

\noindent
We prove the following claim.

\begin{restatable}{theorem}{clusterability}
Algorithm \ref{alg:weak-balance} is a one-sided tester for clusterability with query complexity
\[
\tO(\sqrt{N}/\mathrm{poly}(\epsilon)).
\]
\end{restatable}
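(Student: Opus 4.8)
The plan is to follow the template of the Goldreich--Ron bipartiteness tester \cite{goldreich1999sublinear}, replacing ``odd cycle'' by ``bad cycle'' throughout. One-sidedness is immediate: by \cref{thm:davis} a clusterable graph contains no bad cycle, so $\badcycle(s)$ can never report one and \cref{alg:weak-balance} always accepts. The query complexity is also clear: each call to $\badcycle$ performs $\tO(\sqrt N/\mathrm{poly}(\epsilon))$ random walks of length $\tO(1/\mathrm{poly}(\epsilon))$ on $G^+$, each walk step needs one adjacency-list query, we repeat $O(1/\epsilon)$ times, and checking for a negative edge inside the visited set $K$ costs $\tO(|K|\cdot d)$ further queries. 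So the entire burden is soundness: if $G$ is $\epsilon$-far from clusterable, then \cref{alg:weak-balance} rejects with probability at least $2/3$.

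First I would treat a clean case to isolate the birthday-paradox argument. Suppose $G^+$ has a connected component $C$ with $|C| \ge \Omega(\epsilon N)$ that is rapidly mixing (walks of length $\ell = \tO(1/\mathrm{poly}(\epsilon))$ from most vertices of $C$ reach a distribution pointwise $1/(2|C|)$-close to stationary), and suppose $C$ contains at least $\beta d N$ negative edges with both endpoints in $C$; each such edge together with a positive path inside $C$ is a bad cycle. Condition on the random start node $s$ landing in such a component, which has probability $\Omega(\epsilon)$. Running $t = \tO(\sqrt N/\mathrm{poly}(\epsilon))$ walks, the multiset $K$ behaves essentially like $t$ near-uniform samples from $C$: for a fixed intra-$C$ negative edge $(x,y)$ the probability that $K$ contains both $x$ and $y$ is $\Omega(t^2/N^2)$, so a second-moment computation over the $\beta d N$ such edges shows that $K$ detects at least one of them with constant probability once $t \gtrsim \sqrt{N/(\beta d)}$. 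The correlations between walks (and the fact that they share the start vertex) are handled exactly as in \cite{goldreich1999sublinear}: discard vertices reached with too-high probability and apply Chebyshev to the count of detected edges.

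Next, the general case, where $G^+$ need not split into large rapidly-mixing components. Here I would invoke the graph-partition lemma of Goldreich and Ron: any bounded-degree graph can be partitioned so that each part either induces a rapidly-mixing subgraph or is small, with only an $o(\epsilon)$-fraction of $dN$ edges lying between parts or inside bad parts. Apply this to $G^+$. If $G$ is $\epsilon$-far from clusterable, then after deleting the $o(\epsilon dN)$ exceptional positive edges identified by the decomposition (which realises the refined partition) there must still remain $\Omega(\epsilon d N)$ negative edges lying inside rapidly-mixing parts; otherwise we could make $G$ clusterable with $o(\epsilon dN)$ edge removals --- delete the exceptional positive edges, delete the few remaining intra-part negative edges, and use the refined parts as clusters --- contradicting $\epsilon$-farness (and invoking \cref{thm:davis}, the fact that it suffices to consider edge removals). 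Hence with probability $\Omega(\epsilon)$ the start node $s$ lands in a rapidly-mixing part carrying a positive fraction of these negative edges, and the clean-case analysis applies with $\beta = \Omega(\epsilon)$ up to $\mathrm{poly}(d)$ factors. Each iteration of the outer loop therefore rejects with probability $\Omega(\epsilon)$, and $O(1/\epsilon)$ iterations amplify this to at least $2/3$.

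I expect the main obstacle to be the adaptation of the Goldreich--Ron decomposition: one must verify that their partition lemma, designed for detecting parity-inconsistent edges, still yields what we need when the violating objects are intra-component negative edges, and in particular that ``$\epsilon$-far from clusterable'' --- which a priori allows removing positive edges, removing negative edges, or switching signs --- really does force $\Omega(\epsilon dN)$ detectable intra-part negative edges to survive the decomposition. Once this accounting is pinned down, fixing the precise $\mathrm{poly}(\epsilon)$ and $\mathrm{poly}(d)$ exponents in the walk length and walk count is routine bookkeeping following \cite{goldreich1999sublinear}.
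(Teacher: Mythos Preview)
Your high-level plan matches the paper's: one-sidedness and query complexity are trivial, and the real work is soundness via an adaptation of Goldreich--Ron. But the specific decomposition you invoke does not exist in the form you state, and this is the crux. You assume a one-shot partition of $G^+$ into pieces that are each rapidly mixing as induced subgraphs, with only $o(\epsilon dN)$ positive edges between pieces, and then you argue that random walks on $G^+$ from a start vertex in a piece behave like near-uniform samples from that piece. The paper explicitly flags that no such global decomposition exists (see the sentence ``While such a decomposition does not generally exist\dots''). What GR actually provides (\cref{lem:decomp}) is an \emph{iterative} statement: given the current remaining subgraph $H^+$, for half the vertices $s$ there is a set $S$ with small positive boundary such that a certain \emph{modified} Markov chain $M(H^+)$ lands in $S$ with controlled probabilities. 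The gap between walks on $G^+$ and walks on $M(H^+)$ is bridged by a nontrivial contraction map (\cref{app:mod-MC}, \cref{claim:rw-sufficient}), and this machinery is not optional: a piece $S$ with boundary $\epsilon d|S|/2$ is far too leaky to guarantee that a length-$\tO(1/\mathrm{poly}(\epsilon))$ walk on $G^+$ stays inside $S$, so your ``clean case'' analysis does not transfer directly.

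Relatedly, the paper runs the argument in the contrapositive: assuming the algorithm accepts with probability $>1/3$, it uses the algorithm's own behavior (few ``bad'' start vertices) together with \cref{lem:decomp} to peel off $\epsilon$-good clusters one at a time and thereby \emph{construct} a clustering witnessing $2\epsilon$-closeness. Your direct approach would need the partition to exist independently of the tester, which is exactly the statement that fails. The obstacle you anticipated --- adapting the parity-edge detection to negative-edge detection --- is real but minor (the paper's Lemma in \cref{app:bad-cycle} is a routine second-moment redo of \cite[Lemma~4.5]{goldreich1999sublinear}); the obstacle you missed is the structure of the decomposition itself and the modified Markov chain needed to make it compatible with walks on the original $G^+$.
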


The claim about the query complexity is easy to check.
The total number of random walk steps is $\tO(\sqrt{N}/\mathrm{poly}(\epsilon))$, and a single random walk step can be implemented with $\tO(1)$ queries.
To check whether there exists a negative edge between any pair of nodes in $K$, it suffices to query the full (bounded) neighborhood of every node in $K$.
This takes $d |K| \in \tO(\sqrt{N}/\mathrm{poly}(\epsilon))$ queries.
The remainder of this section is used to prove correctness of the tester (which ultimately follows from \cref{claim:eps-good}).

\subsubsection{Intuition: Expanding case}
We will first describe the intuition behind the tester.
To this end, assume that there is a decomposition $V = V_1 \cup \dots \cup V_k$ as in \cref{fig:weak-balance} such that for each $i$ the following holds:
\begin{enumerate}
\item
$V_i$ has few positive outgoing edges:
\[
|E^+(V_i,V_i^c)|
\leq \frac{\epsilon}{2} d |V_i|.
\]
\item
A random walk of length $\tO(1/\mathrm{poly}(\epsilon))$ on $G^+$, and starting from any $s \in V_i$, ends uniformly at random inside $V_i$.
\end{enumerate}
While such a decomposition does not generally exist, the existence of a closely related decomposition was proven by Goldreich and Ron \cite{goldreich1999sublinear}.

Now assume that $G$ is $\epsilon$-far from being clusterable.
Then we claim that there must be at least $\frac{\epsilon}{2} d N$ negative edges inside the partitions $V_1,\dots,V_k$.
Indeed, if this were not the case, then we could find a valid clustering by removing these $\leq \frac{\epsilon}{2} d N$ negative edges together with the $\leq \frac{\epsilon}{2} d N$ positive edges between the partitions.
This contradicts the fact that $G$ is $\epsilon$-far from being clusterable.

Now make the additional assumption that the number of negative edges $|E^-(V_i)|$ inside each partition $V_i$ is $\Omega(\epsilon |V_i|)$, and consider an arbitrary node $s \in V_i$.
The probability that a pair of random walks on the positive edges, starting from $s$, results in a bad cycle can be lower bounded by the probability that the random walk endpoints $u$ and $v$ form a negative edge $(u,v) \in E^-$.
Since $u$ and $v$ are distributed uniformly, this probability is at least $|E^-(V_i)|/|V_i|^2 \in \Omega(\epsilon/N)$.
Taking $\sqrt{N/\epsilon}$ independent random walks (and ignoring correlations), the total probability of finding a bad cycle then becomes $\Omega\left(\binom{\sqrt{N/\epsilon}}{2} \frac{\epsilon}{N}\right) \in \Omega(1)$.

\begin{figure}[htb]
\centering
\includegraphics[width=.65\textwidth]{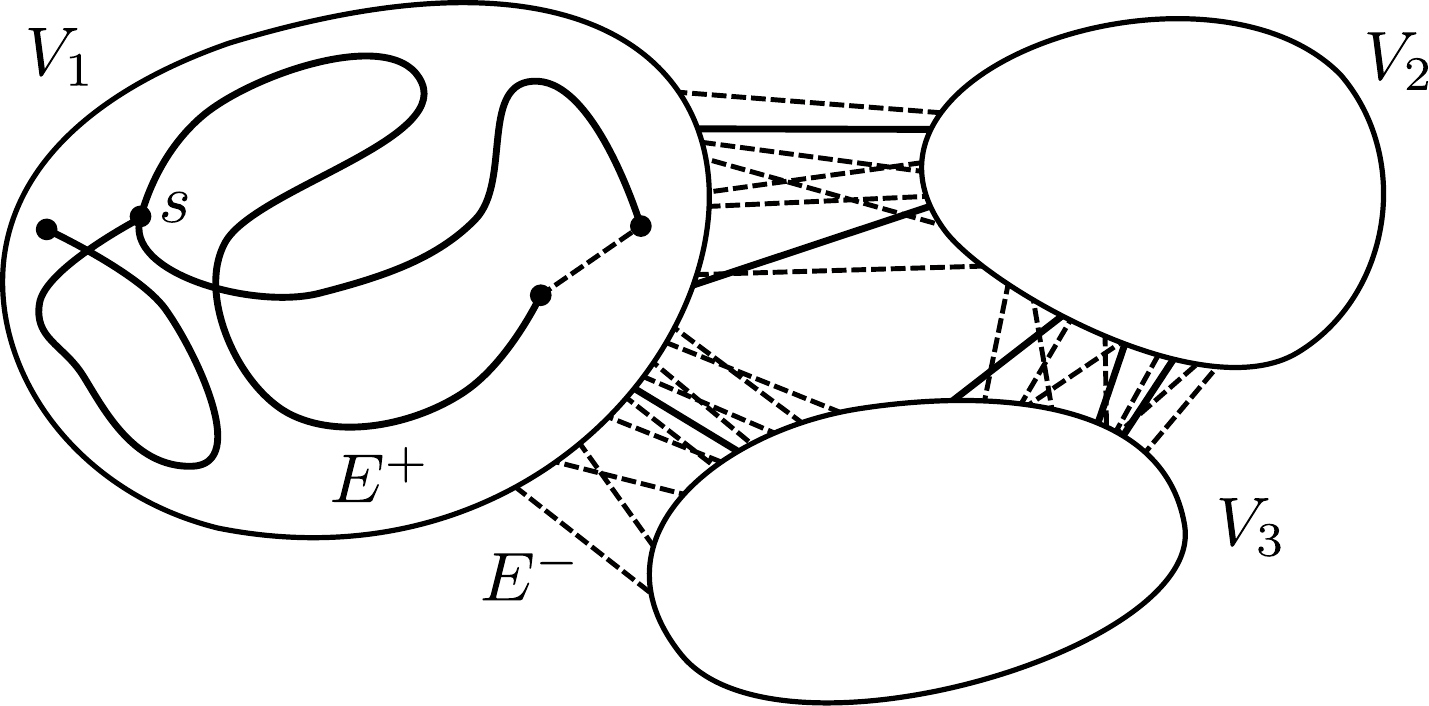}
\caption{Bounded degree tester for clusterability. Positive edges $E^+$ depicted as solid lines, negative edges $E^-$ depicted as dashed lines. $G$ is $\epsilon$-close to clusterable if there are $\leq \frac{\epsilon}{2} d N$ positive edges between the partitions and $\leq \frac{\epsilon}{2} d N$ negative edges inside the partitions.}
\label{fig:weak-balance}
\end{figure}

\subsubsection{General case: GR decomposition}
While the former section correctly captures the intuition behind the tester, the full proof of correctness is significantly more involved.
Luckily, most of the analysis runs similar to that of Goldreich and Ron \cite{goldreich1999sublinear} (regarding the graph decomposition, but also regarding bounds on the correlation between distinct random walks).
We can make the above intuition rigorous.
In the following let $G^+ = (V,E^+)$ denote the (unsigned) subgraph induced by the positive edges.
The main idea of the decomposition is that for most of the vertices $s$ in $G^+$, we can find a subset $S$ so that $S$ has few outgoing edges and a short random walk from $s$ mixes approximately uniformly over $S$.
We can hence set the first partition $V_1 = S$.
Now we would like to repeat the argument for the remaining graph $H^+ = G^+[V - V_1]$, induced on the node subset $V - V_1$.
The problem is that the next subset $S' \subseteq V - V_1$ will be ``good'' for random walks in $H^+$, but these can behave very differently from the original random walks in $G^+$.
This problem is dealt with by defining a Markov chain $M(H^+)$ on the unpartitioned subgraph $H^+$ such that (i) we can use $M(H^+)$ to cut off a new partition $S'$ from $H^+$, but also (ii) that the behavior of walks according to $M(H^+)$ is related to the behavior of the original random walks in $G^+$.
Details of the Markov chain $M(H^+)$ are given in \cref{app:mod-MC}.
The following lemma is proven in \cite{goldreich1999sublinear}, with $q_{s,v}(t)$ denoting the probability that $t$ steps of the Markov chain $M(H^+)$, starting from $s$, end in $v$.

\begin{lemma}[{\cite[Corollary 3 and Lemma 4.3]{goldreich1999sublinear}}] \label{lem:decomp}
Let $H^+$ be a subgraph of $G^+$ with at least $\epsilon N/4$ vertices.
Then for at least half of the vertices $s$ in $H^+$ there exists a subset of vertices $S$ in $H^+$, a value $\beta \in \widetilde\Omega(\epsilon^2)$ and an integer $t \in \tO(1/\epsilon^3)$ such that
\begin{enumerate}
\item
The number of edges between $S$ and the rest of $H^+$ is at most $\epsilon d |S|/2$.
\item
For every $v \in S$ it holds that $\sqrt{\frac{\beta}{|S| |H^+|}} \leq q_{s,v}(t) \leq \frac{1}{\epsilon} \sqrt{\frac{\beta}{|S| |H^+|}}$.
\end{enumerate}
\end{lemma}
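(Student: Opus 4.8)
The lemma is a restatement of Goldreich and Ron's graph-decomposition machinery, so the plan is to reduce it to their analysis applied to the modified Markov chain $M(H^+)$ on $H^+$ whose construction is spelled out in \cref{app:mod-MC}. First I would record the two properties of $M(H^+)$ that drive everything: it is lazy and reversible with the uniform distribution on the vertices of $H^+$ as its unique stationary distribution, and a walk according to $M(H^+)$ from $s$ never leaves $H^+$ while remaining coupled to an honest random walk on $G^+$ started at $s$ (this coupling is the whole reason one runs $M(H^+)$ rather than the literal walk on the induced subgraph, and it is what makes the lemma usable downstream). Writing $q_{s,v}(t)$ for the $t$-step transition probabilities of $M(H^+)$, laziness plus reversibility gives that $t \mapsto \|q_s(t)\|_2$ is non-increasing and stays in the range $\big[\,|H^+|^{-1/2},\,1\,\big]$.

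Now fix $s$. The core is a ``rapid mixing versus bottleneck'' dichotomy. Since $\|q_s(t)\|_2^2$ is non-increasing and $-\log\|q_s(t)\|_2^2$ ranges over an interval of length $O(\log|H^+|)$, a pigeonhole argument over a window of times of total length $t \in \tO(1/\epsilon^3)$ produces some such $t$ at which the squared $2$-norm decreases by at most a $\big(1-\widetilde\Omega(\epsilon^2)\big)$ factor, say $\|q_s(2t)\|_2^2 \geq \big(1-\widetilde\Omega(\epsilon^2)\big)\|q_s(t)\|_2^2$. A localized Cheeger-type argument then turns this slow decrease into geometry: choosing $S$ to be a suitable super-level set of the distribution $q_s(t)$, the slow decrease forces the edge boundary of $S$ inside $H^+$ to be at most $\epsilon d |S|/2$ (otherwise enough mass would flow across the cut to drop the $2$-norm by more than the permitted factor), and simultaneously forces $q_s(t)$ to be flat on $S$ up to a factor $1/\epsilon$ between its largest and smallest values there; calibrating $\beta$ so that this common order of magnitude equals $\sqrt{\beta/(|S|\,|H^+|)}$ then yields item 2 with $\beta \in \widetilde\Omega(\epsilon^2)$.

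Finally, to see that the construction succeeds for at least half of the vertices $s$ of $H^+$, I would invoke Goldreich and Ron's counting argument: the starting vertices for which no valid triple $(S,\beta,t)$ is produced are exactly those whose short walks refuse to settle on any low-boundary set within $\tO(1/\epsilon^3)$ steps, and these ``leaky'' vertices are charged against a global budget — essentially the total edge boundary in $H^+$, or equivalently an averaged $2$-norm quantity over uniformly random starts — which caps their number at $|H^+|/2$. The main obstacle, and the only step that is more than bookkeeping, is getting items 1 and 2 \emph{simultaneously} with the stated parameters: enlarging $S$ improves its relative edge boundary but degrades the flatness ratio, and it is precisely the window-pigeonhole choice of $t$ together with the right super-level threshold that balances this tension. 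Tracking the $\epsilon$-dependencies through this balance — that the window can be taken of length $\tO(1/\epsilon^3)$ and the flatness parameter $\beta$ kept $\widetilde\Omega(\epsilon^2)$ — is the quantitative heart, and here I would appeal to \cite[Corollary 3 and Lemma 4.3]{goldreich1999sublinear} essentially verbatim after checking that $M(H^+)$ satisfies their hypotheses.
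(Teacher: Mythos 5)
Your proposal is correct and takes the same route as the paper, which offers no proof of this lemma at all and simply imports it from \cite[Corollary 3 and Lemma 4.3]{goldreich1999sublinear}; your sketch of the underlying Goldreich--Ron argument (laziness/reversibility, the pigeonhole choice of $t$ from the slowly decreasing collision probability, the level-set extraction of $S$, and the averaging argument for half the starting vertices) is an accurate outline of their proof, with the quantitative $\epsilon$-dependencies deferred to the same citation. The only small imprecision is that walks of $M(H^+)$ live on $\hat H^+$, i.e.\ they may exit onto the auxiliary paths attached to boundary vertices rather than ``never leaving $H^+$'', but this does not affect the argument.
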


This is the key lemma that underlies the GR decomposition.
Under these conditions, we can prove that if the original signed graph $G$ has many negative edges inside the subset $S$ then the modified Markov chain $M(H^+)$ will find a bad cycle.
This lemma is new, but its proof (which we defer to \cref{app:bad-cycle}) runs along the lines of the proof of \cite[Lemma 4.5]{goldreich1999sublinear}.
\begin{restatable}{lemma}{badcyclelemma}
Let $H^+$ be a subgraph of $G^+$, $s$ a vertex in $H^+$ and $S$ a subset of vertices in $H^+$.
Assume that there exist $\alpha > 0$, $F \geq 1$, $t$ such that $\alpha \leq q_{s,v}(t) \leq F \alpha$ for every $v \in S$.
If the original graph $G$ has at least $\epsilon d |S|/2$ negative edges inside $S$ then $m \in \Omega\left( \frac{F}{\epsilon \alpha \sqrt{|S|}} \right)$ runs of $M(H^+)$ over $t$ steps and starting from $s$ will return a bad cycle with probability at least $0.99$.
\end{restatable}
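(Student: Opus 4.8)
The plan is to reduce ``\cref{alg:weak-balance} finds a bad cycle'' to a birthday-type collision among the endpoints of the $m$ random walks, and then bound that collision probability by a second-moment argument; the two-sided estimate $\alpha\le q_{s,v}(t)\le F\alpha$ on $S$ is precisely what yields the square-root (birthday) savings.

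First I would record that a run of $M(H^+)$ from $s$ only ever traverses edges of $H^+$, hence only positive edges of $G$ (immediate from the definition of $M(H^+)$ in \cref{app:mod-MC}); in particular the endpoint $X$ of such a run is joined to $s$ by a path of positive edges. Therefore, if two runs $i\ne j$ have endpoints $X_i,X_j$ with $\{X_i,X_j\}\in E^-$, then gluing the positive walk from $s$ to $X_i$, the negative edge, and the reversed positive walk from $s$ to $X_j$ produces a closed walk that uses exactly one negative edge; shortcutting its positive part to a simple path yields a cycle with exactly one negative edge, i.e.\ a bad cycle, whose endpoints both lie in the visited set, so the algorithm detects it. Hence it suffices to lower bound the probability of the event ``$\exists\,i\ne j:\{X_i,X_j\}\in E^-(S)$'', where $E^-(S)$ denotes the negative edges with both endpoints in $S$; by hypothesis $|E^-(S)|\ge \epsilon d|S|/2$.

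Write $q:=q_{s,\cdot}(t)$, so the endpoints $X_1,\dots,X_m$ are i.i.d.\ with law $q$, and let $Y$ count the pairs $i<j$ with $\{X_i,X_j\}\in E^-(S)$; I would bound $\Pr[Y=0]\le\Var[Y]/\Exp[Y]^2$. For the first moment, $\Exp[Y]=\binom m2\,q^{(2)}$ with $q^{(2)}:=\sum_{\{a,b\}\in E^-(S)}2q_aq_b$, and $q_v\ge\alpha$ on $S$ gives $q^{(2)}\ge 2|E^-(S)|\alpha^2\ge\epsilon d|S|\alpha^2$. For the variance, pair-events sharing no walk are independent, so $\Var[Y]\le\Exp[Y]+m^3R$ (there are at most $m^3$ pairs of pair-events sharing exactly one walk, each of covariance $\le R$) with $R:=\Pr[\{X_1,X_2\}\in E^-(S)\text{ and }\{X_1,X_3\}\in E^-(S)]$. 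For $w\in S$ set $f(w):=\sum q_v$ over those $v\in S$ with $\{v,w\}\in E^-$; conditioning on $X_1=w$ and using independence of $X_2,X_3$ gives $R=\sum_{w\in S}q_w f(w)^2$, while the same bookkeeping shows $q^{(2)}=\sum_{w\in S}q_w f(w)$, whence $R\le\big(\max_{w\in S}f(w)\big)\,q^{(2)}\le dF\alpha\,q^{(2)}$ (each $f(w)$ is a sum of at most $d$ terms, each at most $F\alpha$). Plugging $R\le dF\alpha\,q^{(2)}$ and $q^{(2)}\ge\epsilon d|S|\alpha^2$ into Chebyshev's inequality gives
\[
\Pr[Y=0]\ \le\ \frac1{\Exp[Y]}+\frac{m^3R}{\Exp[Y]^2}\ \le\ \frac{4}{m^2\,\epsilon d|S|\alpha^2}+\frac{16F}{m\,\epsilon|S|\alpha},
\]
and choosing $m$ to be a sufficiently large constant multiple of $F/(\epsilon\alpha\sqrt{|S|})$ makes both terms at most $0.005$ — using $F\ge1$, $d\ge1$, $\epsilon<1$, $|S|\ge1$, and $\alpha|S|\le1$ (the latter since $\sum_{v\in S}q_v\le1$) — hence $\Pr[Y\ge1]\ge0.99$, which is what we need.

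The hard part is squeezing out the $\sqrt{|S|}$ rather than $|S|$ in the denominator: the naive argument that the $\lfloor m/2\rfloor$ disjoint walk-pairs are independent Bernoulli trials only gives $m=\Omega(1/(\epsilon d|S|\alpha^2))$, which is worse by a $\sqrt{|S|}$ factor. Beating this needs the second-moment bound, whose crucial input is the \emph{upper} bound $q_{s,v}(t)\le F\alpha$: it is exactly what keeps $\max_{w\in S}f(w)$ (hence $R$) small, i.e.\ prevents the walk endpoints from clumping on a few vertices. Two smaller points to handle carefully: \cref{alg:weak-balance} inspects all visited vertices, not just endpoints, so the endpoint-collision event is only a lower bound for the true success probability (which only helps); and one should appeal to the construction of $M(H^+)$ to be certain its walks never leave the positive edges, so that a detected negative edge genuinely closes a bad cycle.
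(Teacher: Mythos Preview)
Your proposal is correct and follows essentially the same route as the paper: a second-moment (Chebyshev) argument on the count $Y=\sum_{i<j}\eta_{i,j}$ of endpoint pairs joined by a negative edge, lower-bounding $\Exp[\eta_{1,2}]\ge\alpha^2|E^-(S)|$ via the lower bound $q_{s,v}(t)\ge\alpha$ and controlling the dominant covariance term $\Exp[\eta_{1,2}\eta_{2,3}]$ by $dF\alpha\cdot\Exp[\eta_{1,2}]$ via the upper bound $q_{s,v}(t)\le F\alpha$. Your write-up is in fact slightly more careful than the paper's in justifying that a negative edge between two endpoints of $M(H^+)$-walks really closes a bad cycle (the paper leaves this implicit); the only imprecision is the phrase ``only ever traverses edges of $H^+$'', since $M(H^+)$ lives on $\hat H^+$ and its transitions may encode excursions through $G^+\setminus H^+$---but your conclusion that the corresponding $s$-to-endpoint path in $G$ uses only positive edges is correct.
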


Ultimately we are ofcourse interested in the behavior of random walks in $G^+$, rather than that of $M(H^+)$.
The following lemma (proven in \cref{app:mod-MC}) shows that both are closely related.

\begin{claim} \label{claim:rw-sufficient}
Assume that there exists $s$ and $m$ such that $m$ walks of $M(H^+)$ of length $\tO(1/\epsilon^3)$ and starting from $s$ result in a bad cycle with probability at least $0.99$.
Then $m$ random walks in $G^+$ of length $\tO(1/\epsilon^8)$ and starting from $s$ will also result in a bad cycle with probability at least $0.99$.
\end{claim}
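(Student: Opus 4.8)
The plan is to prove \cref{claim:rw-sufficient} by coupling the modified chain $M(H^+)$ with the lazy random walk on $G^+$ so that each $M(H^+)$-walk of length $t = \tO(1/\epsilon^3)$ from $s$ is ``contained in'' a longer $G^+$-walk from $s$, in the sense that every node visited by the former is also visited by the latter. A bad cycle is declared precisely when some negative edge of $G$ joins two visited nodes, and any two nodes visited by positive-edge walks from $s$ lie in a common component of $G^+$, so such an edge always closes a cycle with exactly one negative edge; hence this monotonicity in the set of visited nodes immediately transfers the detection probability from the $M(H^+)$-walks to the $G^+$-walks. First I would recall from \cref{app:mod-MC} that one step of $M(H^+)$ from a vertex $v \in H^+$ is realised by running the lazy $G^+$-walk from $v$ until it next returns to $H^+$ --- that is, $M(H^+)$ is (essentially) the trace of the $G^+$-walk on the vertex subset $H^+$. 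Consequently a $G^+$-walk from $s$, run long enough to return to $H^+$ at least $t$ times, automatically realises the first $t$ steps of the $M(H^+)$-walk from $s$, and the extra nodes it visits during its excursions outside $H^+$ can only help.

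The technical core is to bound this length blow-up: I would show that with probability $1 - 1/\mathrm{poly}(N)$ a $G^+$-walk of length $T' = \tO(1/\epsilon^8)$ returns to $H^+$ at least $t = \tO(1/\epsilon^3)$ times, hence realises at least $t$ steps of the $M(H^+)$-walk. This is where the structure of the GR decomposition enters: $H^+$ is one of the ``unpartitioned'' subgraphs $H_i^+$, so its complement is the union of the earlier-removed parts $V_1,\dots,V_{i-1}$, and since each $V_j$ had at most $\epsilon d|V_j|/2$ positive edges to the subgraph that remained when it was cut off (condition 1 of \cref{lem:decomp}), summing gives that the number of positive edges between $H^+$ and its complement is at most $\epsilon d N/2$. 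This makes the excursions of the $G^+$-walk outside $H^+$ short, so that realising $t$ steps of $M(H^+)$ costs only a $\mathrm{poly}(1/\epsilon)$ and $\mathrm{polylog}(N)$ factor more $G^+$-steps, except with probability $1/\mathrm{poly}(N)$ (the polynomial being tunable through the hidden logarithmic factors in $T'$). Quantitatively this is the slowdown analysis of the modified chain in \cite{goldreich1999sublinear}, which yields the stated $\tO(1/\epsilon^8)$.

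Finally I would take a union bound over the $m = \tO(\sqrt{N}/\mathrm{poly}(\epsilon)) \le N$ walks: with probability $1 - o(1)$ every one of the $m$ $G^+$-walks realises its $M(H^+)$ counterpart, so the node set visited by the $m$ $G^+$-walks contains the one visited by the $m$ $M(H^+)$-walks, and therefore the $G^+$-walks return a bad cycle whenever the $M(H^+)$-walks do. Combined with the hypothesis this gives detection probability at least $0.99 - o(1)$, which is the claimed bound (the constant $0.99$ is not tight and the vanishing additive term is immaterial for the tester). I expect the main obstacle to be exactly the second step: making the excursion-length estimate rigorous and high-probability rather than merely in expectation, so that the number of $G^+$-steps needed to simulate $t$ steps of $M(H^+)$ is genuinely $\tO(1/\epsilon^8)$ with overwhelming probability --- this is the part of the argument that follows \cite{goldreich1999sublinear} most closely.
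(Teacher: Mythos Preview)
Your high-level idea --- that the vertices visited by an $M(H^+)$-walk form a subset of those visited by a corresponding, longer $G^+$-walk, so bad-cycle detection is monotone --- is exactly right and is what the paper uses. But you have misidentified what $M(H^+)$ is, and this forces you into complications the paper avoids entirely.

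$M(H^+)$ is \emph{not} the trace (induced) chain of the $G^+$-walk on $H^+$. It is a chain on an augmented graph $\hat H^+$ (auxiliary paths of length $\ell_1$ attached to boundary vertices), and its one-step transition from $v\in H^+$ to $u\in H^+$ only counts $G^+$-excursions that return within $\ell_2-1$ steps; excursions of length $\ge \ell_2$ are deterministically routed onto an auxiliary path. With this definition there is a \emph{deterministic, surjective} map $\phi$ from $G^+$-walks of the fixed length $L=\ell_1\ell_2$ to $\hat H^+$-walks of length $\ell_1$, and the law of $M(H^+)$ is by construction the pushforward $\phi_*\Pr_{G^+}$. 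Since every $H^+$-vertex appearing in $\phi(W)$ already appears in $W$ (and auxiliary vertices are not in $G$, hence cannot sit on negative edges), the indicator of ``bad cycle found'' satisfies $I_{\mathrm{bad}}(W_1,\dots,W_m)\ge I_{\mathrm{bad}}(\phi(W_1),\dots,\phi(W_m))$. One line of summation then gives
\[
\Pr_{G^+}[\text{bad cycle}] \;\ge\; \Pr_{M}[\text{bad cycle}] \;\ge\; 0.99,
\]
exactly, with no union bound over $m$ and no high-probability excursion estimate. The length $\tO(1/\epsilon^8)$ is simply the product $\ell_1\ell_2$ with $\ell_1\in\tO(1/\epsilon^3)$ and the cutoff parameter $\ell_2\in\tO(1/\epsilon^5)$ fixed in advance.

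Consequently, your ``technical core'' --- proving that with probability $1-1/\mathrm{poly}(N)$ a $G^+$-walk of length $\tO(1/\epsilon^8)$ makes $\tO(1/\epsilon^3)$ returns to $H^+$, using the sparse-boundary property of the decomposition --- is unnecessary, and as stated it is also incomplete: a bound of $\epsilon dN/2$ on $|E^+(H^+,V\setminus H^+)|$ does not by itself control excursion lengths without a further argument about return times, which you do not supply. The paper's cutoff-$\ell_2$ construction sidesteps this issue completely.
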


\subsubsection{Correctness}

We can prove correctness of the tester (Algorithm \ref{alg:weak-balance}) by combining the ingredients from the previous section.
Rather than proving that a graph $G$ that is $\epsilon$-far from clusterable will be rejected, we will prove that if $G$ is accepted with large probability then $G$ must be close to being clusterable.
This is again similar to the proof of correctness in \cite{goldreich1999sublinear}.

\begin{claim} \label{claim:eps-good}
If Algorithm \ref{alg:weak-balance} accepts a graph $G$ with probability greater than $1/3$, then $G$ must be $2\epsilon$-close to being clusterable.
\end{claim}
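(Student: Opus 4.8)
The plan is to run the contrapositive: assuming that Algorithm~\ref{alg:weak-balance} accepts $G$ with probability exceeding $1/3$, I will iteratively peel off low-conductance pieces of $G^+$ using \cref{lem:decomp}, and argue that each such piece must contain few negative edges internally, so that a global clustering can be obtained by deleting only $O(\epsilon d N)$ edges. Concretely, I would build the GR decomposition $V = V_1 \cup V_2 \cup \dots \cup V_k \cup R$ greedily: while the current leftover subgraph $H^+$ has at least $\epsilon N/4$ vertices, apply \cref{lem:decomp} to obtain a set $S$ with $|E^+(S, \overline{S} \cap V(H^+))| \le \epsilon d |S|/2$ and with the random-walk concentration bound on $q_{s,v}(t)$ for at least half of the vertices $s$ of $H^+$. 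Peel off $S$ as the next part $V_i$, set $H^+ \leftarrow H^+[V(H^+) \setminus S]$, and repeat; the remaining fewer-than-$\epsilon N/4$ vertices form the remainder set $R$ (each of its vertices can be made its own singleton cluster at a cost of at most $\epsilon d N/4$ edge deletions, since their total degree is at most $\epsilon N d /4$).

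The core step is bounding the number of negative edges inside the parts. Fix a part $V_i = S$ that was cut off from some $H^+$. I claim $|E^-(S)| < \epsilon d |S|/2$. Suppose not. Since $S$ was produced by \cref{lem:decomp}, there is a constant fraction of ``good'' start vertices $s \in V(H^+)$ (at least half) for which the concentration bound holds with parameters $\alpha = \sqrt{\beta/(|S||H^+|)}$, $F = 1/\epsilon$, $t \in \tO(1/\epsilon^3)$. By the restated \badcyclelemma, from such an $s$, $m \in \Omega\!\left(\tfrac{F}{\epsilon \alpha \sqrt{|S|}}\right)$ runs of $M(H^+)$ over $t$ steps find a bad cycle with probability $\ge 0.99$; plugging in $\alpha$ and $F$ and using $|H^+| \le N$ shows $m \in \tO(\sqrt{N}/\mathrm{poly}(\epsilon))$, matching the budget of $\badcycle$. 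By \cref{claim:rw-sufficient}, the same $m$ random walks in $G^+$ of length $\tO(1/\epsilon^8)$ from $s$ also find a bad cycle with probability $\ge 0.99$. Now I need a counting argument: the good start vertices $s$ for which this holds, summed over all parts $V_i$, must cover a noticeable fraction of $V$ — specifically at least $\Omega(\epsilon N)$ of them (each peel removes a good-vertex set of size $\ge |V(H^+)|/2 \ge \Omega(\epsilon N)$ until the process halts, though one has to be slightly careful that ``good for $H^+$'' vertices stay in later parts; I would instead argue that at the first peeling step already $\ge N/2 - \epsilon N/4 = \Omega(N)$ vertices are good if the first part has a large negative-edge count, or more robustly sum the good sets over all rounds, each of size $\ge \Omega(\epsilon N)$, and note their union has size $\ge \Omega(\epsilon N)$). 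Since Algorithm~\ref{alg:weak-balance} picks $O(1/\epsilon)$ uniformly random start nodes, with probability $\ge 2/3$ at least one of them lands in this set of $\Omega(\epsilon N)$ ``bad-cycle-revealing'' vertices, and then $\badcycle$ rejects with probability $\ge 0.99$. Hence $G$ would be rejected with probability $\ge 2/3 \cdot 0.99 > 1/3$, contradicting the acceptance hypothesis. Therefore every part $V_i$ has $|E^-(V_i)| < \epsilon d |V_i|/2$.

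Finally I would tally the total editing cost. The positive edges leaving the parts sum to $\sum_i |E^+(V_i, \cdot)| \le \sum_i \epsilon d |V_i|/2 \le \epsilon d N/2$ (each part's outgoing-edge bound is against its ambient $H^+$ at peeling time, which still upper-bounds its count against all of $V$). The negative edges inside parts sum to $\sum_i |E^-(V_i)| < \sum_i \epsilon d |V_i|/2 \le \epsilon d N /2$. The remainder $R$ costs at most $\epsilon d N/4$ more. Deleting all of these (at total cost at most $\epsilon d N/2 + \epsilon d N/2 + \epsilon d N/4 = \tfrac{5}{4}\epsilon d N$, which I would tighten — e.g.\ by choosing the decomposition parameter to be $\epsilon/4$ rather than $\epsilon/2$, or by observing the remainder bound is $\epsilon d N /8$ — to land at $\le 2\epsilon d N$) yields a signed graph in which every positive edge stays within a part and every negative edge goes between parts, i.e.\ a graph that is clusterable with the parts as clusters. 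Hence $G$ is $2\epsilon$-close to clusterable, proving \cref{claim:eps-good}.

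The main obstacle I anticipate is the bookkeeping around the GR decomposition: \cref{lem:decomp} only guarantees the good-set property for vertices of the current residual graph $H^+$, and a vertex that is ``good'' for an early $H^+$ may be assigned to a much later part, so I must be careful to phrase the claim as ``there is a set $W$ of $\Omega(\epsilon N)$ vertices $s$, each belonging to some part $V_i$, from which many $G^+$-random walks reveal a bad cycle in $G$ whenever that part has $\ge \epsilon d |V_i|/2$ negative edges.'' Getting the quantifiers right so that the $O(1/\epsilon)$ random starts of the algorithm hit $W$ with constant probability — and making sure the per-start walk budget $\tO(\sqrt N/\mathrm{poly}(\epsilon))$ genuinely dominates the $m$ from \badcyclelemma uniformly over all parts (the worst case being $|S|$ small and $\alpha$ correspondingly large, which actually helps) — is where the care is needed. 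Everything else is summation of the two edge-count budgets plus the remainder term.
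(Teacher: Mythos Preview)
There is a genuine gap: you have reversed the quantifiers in \cref{lem:decomp}. The lemma does \emph{not} produce a single set $S$ for which the concentration bound $\alpha \le q_{s,v}(t) \le F\alpha$ holds for half the vertices $s$ of $H^+$. It says that for at least half of the vertices $s$ there \emph{exists} a set $S = S(s)$ (together with $\beta(s)$ and $t(s)$) satisfying the two conditions. Different useful vertices may yield entirely different sets, so once you have peeled off one particular $S$ you cannot invoke the many-$s$ concentration bound on that same $S$ to argue that $\Omega(\epsilon N)$ start vertices would detect a bad cycle inside it. Your whole counting argument --- summing ``good start vertices'' over parts to reach $\Omega(\epsilon N)$ bad-cycle-revealing vertices --- collapses for this reason, and your ``main obstacle'' paragraph worries about the wrong bookkeeping issue.

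The paper sidesteps this by inverting the logic. It first defines, globally and independently of any decomposition, the set of \emph{bad} vertices: those $s$ for which $\badcycle(s)$ returns a bad cycle with probability exceeding $0.1$. From the hypothesis that the algorithm accepts with probability greater than $1/3$ (and that it tries $\Omega(1/\epsilon)$ independent starts) one deduces that there are at most $\epsilon N/16$ bad vertices. Then, at each peeling step, since $|H^+| \ge \epsilon N/4$ there are at least $\epsilon N/8$ useful vertices by \cref{lem:decomp}, hence at least one useful vertex $s$ is also \emph{good}. One peels off the set $S = S(s)$ associated to \emph{that specific} $s$. Now the bad-cycle lemma together with \cref{claim:rw-sufficient} say that if $S$ contained $\ge \epsilon d|S|/2$ negative edges then $\badcycle(s)$ would succeed with probability at least $0.99$, contradicting goodness of $s$. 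So every peeled $S$ automatically has few internal negative edges, and no per-part counting of revealing start vertices is needed. (Two smaller slips in your write-up: the bound on $|E^+(S, H^+\setminus S)|$ is a lower, not an upper, bound on $|E^+(S, V\setminus S)|$, though the total cross-edge count still works because each inter-part edge is charged once to the earlier part; and rejection probability $\ge 2/3 \cdot 0.99$ does not contradict acceptance $>1/3$.)
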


To prove this, let $G$ be a graph that is accepted with probability greater than $1/3$.
We say that a vertex $s$ is \emph{good} if $\badcycle(s)$ in \cref{alg:weak-balance} returns a bad cycle with probability at most $0.1$.
Otherwise it is \emph{bad}.
Since we reject with probability less than $2/3$, and we consider $\Omega(1/\epsilon)$ starting vertices, there can be only $\epsilon N/16$ bad vertices (for the appropriate constant in the $\Omega(\cdot)$ notation).
We will show that under these circumstances we can find a valid clustering by removing less than $2 \epsilon d N$ edges.
To this end, we will iteratively separate a subset $S$ that has at most $\epsilon d |S|/2$ positive outgoing edges and at most $\epsilon d |S|/2$ negative internal edges.
We call such a subset an \emph{$\epsilon$-good cluster}.

At a given step, let $H^+$ denote the unpartitioned graph.
We wish to invoke Lemma \ref{lem:decomp}.
Call a vertex $s$ for which the lemma holds a ``useful'' vertex with respect to $H^+$.
While $|H^+| \geq \epsilon N/4$, the lemma ensures that there are $\geq \epsilon N/8$ useful vertices.
Since there are at most $\epsilon N/16$ bad vertices, this implies that there exists a vertex $s$ that is both good and useful.
By Lemma \ref{lem:decomp} there exists a subset $S$ in $H^+$ that has at most $\frac{\epsilon}{2} d |S|$ (positive) edges to the rest of $H^+$.
Moreover, by Lemma \ref{lem:bad-cycle} and Claim \ref{claim:rw-sufficient}, the set $S$ is such that if the original signed graph $G$ has at least $\epsilon d |S|$ negative edges inside $S$ then $\badcycle(s)$ will return a bad cycle with probability at least $0.99$.
However, we assumed that $s$ is a good vertex and so the latter probability can be at most $0.1$.
This implies that $G$ must have less than $\epsilon d |S|/2$ negative edges inside $S$, and hence $S$ is an $\epsilon$-good cluster.

We repeat this process until $|H^+| < \epsilon N/4$.
If $V_1,\dots,V_k$ denote the $\epsilon$-good clusters that we have cut off, then we have a partition $V = V_1 \cup \dots \cup V_k \cup H$ such that the number of positive edges between the partitions is at most
\[
d |H|
+ \sum_i |E^+(V_i,V_i^c)|
\leq \frac{\epsilon}{4} d N + \sum_i \frac{\epsilon}{2} d |V_i|
< \epsilon d N,
\]
and the number of negative edges inside the partitions is at most
\[
d |H| + \sum_i |E^-(V_i)|
\leq \frac{\epsilon}{4} d N + \sum_i \frac{\epsilon}{2} d |V_i|
< \epsilon d N.
\]
Removing these less than $2 \epsilon d N$ edges yields a valid clustering, so that $G$ must be $2 \epsilon$-close to clusterable.
This proves \cref{claim:eps-good}.

\section{Acknowledgements}
This work has benefited from discussions with Jop Bri\"et, Aristides Gionis, Oded Goldreich and Christian Sohler.
Florian Adriaens is supported by the ERC Advanced Grant REBOUND (834862), the EC H2020 RIA project SoBigData (871042), and the Wallenberg AI, Autonomous Systems andSoftware Program (WASP) funded by the Knut and Alice Wallenberg Foundation.
Simon Apers is supported in part by the Dutch Research Council (NWO) through QuantERA ERA-NET Cofund project QuantAlgo 680-91-034.

%%
%% Bibliography
%%
\bibliographystyle{alpha}
\bibliography{biblio.bib}

\appendix

\section{Technical details for dense signed graph model} \label{app:dense}

\subsection{Clusterability} \label{app:dense-cluster}

The following theorem is proven by Andersson and Engebretsen \cite{andersson2002property}.

\begin{theorem}[{\cite[Theorem 2]{andersson2002property}}] \label{thm:CSP}
Consider a constraint family $\mathcal{F} = \{f:D^\ell \to \{0,1\}\}$ over $\ell$ variables in domain $D$, and let $\Sigma$ denote the maximum number of constraints that can be simultaneously satisfied.
An $\ell$-CSP-$D$ instance $\mathcal I$ over $n$ variables with constraint family $\mathcal{F}$ is described by a collection of constraints $\{(f,x_{i_1},\dots,x_{i_\ell})\}$ with $f \in \mathcal{F}$ and $i_1,\dots,i_\ell \in [n]$.
It is possible to approximate the maximum number of satisfiable constraints $\max(\mathcal I)$ up to error $\epsilon n^k$ with probability at least $1-\delta$ using
\[
\tO\left( \frac{|\mathcal{F}| \Sigma^7 \ell^2}{\epsilon^7} \right)
\]
queries\footnote{A query to the instance $\mathcal{I}$ takes the form $Q = (f,x_{i_1},\dots,x_{i_\ell})$ and returns 1 if $Q \in \mathcal{I}$ and 0 otherwise.} and time $\exp\big( \tO\big( \frac{\Sigma^3 \ell}{\epsilon^3} \big) \big)$.
\end{theorem}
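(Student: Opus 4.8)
The plan is to establish \cref{thm:CSP} via the \emph{exhaustive sampling} paradigm for approximating dense constraint satisfaction problems, in the spirit of the MAX-CSP approximation schemes of Alon, de la Vega, Kannan and Karpinski and of Arora, Karger and Karpinski, adapted to a finite alphabet. The starting point is that, for any fixed assignment $a : [n] \to D$, the number $\mathrm{val}(a)$ of constraints of $\mathcal I$ satisfied by $a$ depends on $a$ only through which value of $D$ each variable takes, and is ``$O(n^{\ell-1})$-Lipschitz'': changing $a$ on one variable changes $\mathrm{val}$ by at most $O(|\mathcal F|\, \ell\, n^{\ell-1})$, since only constraints touching that variable are affected. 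Since $\mathcal I$ has $O(n^\ell)$ constraints, this is exactly the regime in which a random sub-sample of the variables of size \emph{independent of $n$} already pins $\max(\mathcal I) = \max_a \mathrm{val}(a)$ down to within an additive $\epsilon n^\ell$, and in which the value of a candidate assignment is estimable by sampling a handful of random $\ell$-tuples of variables and querying $\mathcal I$ on them.

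First I would draw a uniformly random set $W \subseteq [n]$ of $s = \mathrm{poly}(\Sigma, |\mathcal F|, \ell, 1/\epsilon)$ variables and reason about the restriction $b := a^\star|_W$ of an (unknown) optimal assignment $a^\star$. Since the number of candidates for $b$ is bounded in terms of $|D|$, $|\mathcal F|$ and $s$ only, the algorithm can afford an exhaustive search over all of them, and it then suffices to handle the correct guess. For that guess I would construct a global assignment $\widehat a$ by a \emph{local greedy rule}: for each variable $i$ and each $c \in D$, estimate the ``marginal gain'' of setting $x_i = c$ by sampling random $(\ell-1)$-tuples of variables lying in $W$, evaluating under $b$ the constraints they form together with $i$, and rescaling; then let $\widehat a(i)$ be the empirically best value. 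A Chernoff/Azuma concentration bound, union-bounded over all candidate restrictions $b$, all variables $i$ and all values $c \in D$, shows that with high probability $\widehat a$ reproduces the per-variable contribution of $a^\star$ up to an error of $O((\epsilon/\ell) n^{\ell-1})$ for all but an $\epsilon$-fraction of variables, hence $\mathrm{val}(\widehat a) \geq \max(\mathcal I) - O(\epsilon n^\ell)$, while $\mathrm{val}(\widehat a) \leq \max(\mathcal I)$ is trivial. Finally I would estimate $\mathrm{val}(\widehat a)$ itself to within $\epsilon n^\ell$ by drawing $\tO(1/\epsilon^2)$ random $\ell$-tuples, querying which are constraints, and counting the satisfied ones; outputting the largest estimate over all guesses $b$ then gives $\max(\mathcal I)$ up to $\pm \epsilon n^\ell$. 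Carefully propagating the sample sizes, the $|\mathcal F|$ and $\Sigma$ factors entering the marginal-gain estimates, and the union-bound slack over the exhaustive search yields the stated query complexity $\tO\!\big(|\mathcal F|\, \Sigma^7 \ell^2 / \epsilon^7\big)$, while the exhaustive search contributes the $\exp\big(\tO(\Sigma^3 \ell / \epsilon^3)\big)$ running time.

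The crux of the argument --- and the reason for the high-degree $\epsilon^{-7}$ and $\Sigma^{7}$ dependence --- is the local reconstruction step: one has to show that a \emph{single} random sample $W$ of size independent of $n$ simultaneously (a) makes the greedy marginal-gain estimates accurate to within $O((\epsilon/\ell) n^{\ell-1})$ for almost all variables, and (b) does so for the \emph{a priori unknown} restriction $b = a^\star|_W$, which is what forces the union bound over all possible sample-assignments and hence a sample size that is polynomial in $1/\epsilon$ of fairly large degree. A second, more routine but bookkeeping-heavy point is error accumulation: the per-variable greedy errors, the final value-estimation error, and the sampling error in the structural lemma must each be kept below a small constant fraction of $\epsilon n^\ell$. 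Once \cref{thm:CSP} is available, the payoff for the present paper is immediate: modelling $k$-clusterability as a $2$-CSP over domain $[k]$ with $k = O(1/\epsilon)$ (an equality constraint per positive edge and an inequality constraint per negative edge, so $\ell = 2$, $|\mathcal F| = 2$, and the relevant $\Sigma = O(1)$), \cref{thm:CSP} furnishes an additive $\epsilon N^2$ estimate of the weak frustration index using $\tO(1/\epsilon^7)$ queries, which together with \cref{lem:cluster} tolerantly tests clusterability.
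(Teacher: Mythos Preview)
The paper does not prove \cref{thm:CSP}; it is quoted verbatim as a black-box result from Andersson and Engebretsen and then applied. There is therefore no ``paper's own proof'' to compare against. Your final paragraph, which explains how to instantiate the theorem for $k$-clusterability with $\ell=2$, $|\mathcal F|=2$, $\Sigma=O(1)$ and domain $[k]$ with $k\in O(1/\epsilon)$, is exactly what the paper does in the appendix, so on the application side you match the paper precisely.

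As for the sketch itself: the exhaustive-sampling-plus-greedy-reconstruction outline you give is indeed the methodology of Andersson--Engebretsen (and its predecessors by Arora--Karger--Karpinski and Alon--de~la~Vega--Kannan--Karpinski), so the high-level plan is sound. Two caveats, though. First, the specific exponents $\Sigma^7$ and $\epsilon^{-7}$ do not fall out automatically from ``carefully propagating the sample sizes''; in the Andersson--Engebretsen analysis they arise from a rather delicate choice of how the sample is partitioned into blocks and how the union bound over partial assignments interacts with the per-block Chernoff error, and your sketch does not expose that structure. Second, your parenthetical ``the number of candidates for $b$ is bounded in terms of $|D|$, $|\mathcal F|$ and $s$ only'' is true but somewhat misleading: the bound is $|D|^s$, and since $|D|$ is not among the parameters in the stated query complexity, one has to argue (as Andersson--Engebretsen do) that the domain size enters only the running time, not the sample size --- this is where the $\Sigma$ parameter, rather than $|D|$, shows up in the query bound. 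Your sketch glosses over this distinction.
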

The problem of $k$-clusterability is a special instance of this problem.
Define the family $\mathcal{F} = \{f^+, f^-: [k]^2 \to \{0,1\} \}$ by $f^+(x,y) = 1$ if $x=y$ and $0$ otherwise, and $f^-(x,y) = 0$ if $x = y$ and $1$ otherwise.
Now given a signed graph $G$, we can define a collection $\mathcal{I}_G$ of constraints by adding constraint $(f^+,x,y)$ if $(x,y)$ is a positive edge, and $(f^-,x,y)$ if $(x,y)$ is a negative edge.
We can query $\mathcal{I}_G$ using a single query to the adjacency matrix of $G$.
Moreover, the $k$-frustration index of $G$ is given by
\[
|E(G)| - \max(\mathcal{I}_G),
\]
with $|E(G)|$ the number of edges in $G$.
Using that $\Sigma = 1$, $|\mathcal{F}| = 2$ and $\ell = 2$ for the family $\mathcal{F}$, it follows from Theorem \ref{thm:CSP} that we can find an $\epsilon N^2$ approximation of $\max(\mathcal{I}_G)$ using $\tO(1/\epsilon^7)$ queries and time $\exp(\tO(1/\epsilon^3))$.
In addition we can easily approximate $|E(G)|$ to additive error $\epsilon N^2$ by randomly sampling entries of the adjacency matrix of $G$.
Combining these gives an approximation algorithm for the $k$-frustration index with additive error $\epsilon N^2$, and hence a tolerant tester for $k$-clusterability.

\section{Technical details for clusterability testing in bounded-degree model} \label{app:bounded-degree}

\subsection{Modified Markov chain} \label{app:mod-MC}

In this section we describe the technical details on the modified Markov chain proposed by Goldreich and Ron \cite{goldreich1999sublinear}.
Let $G = (V,E,\sigma)$ be a signed graph and let $G^+ = (V,E^+)$ be the subgraph induced on the positive edge set.
Let $H^+$ be a subgraph of $G^+$, and let $\ell_1,\ell_2$ be integers.
The boundary $B(H^+)$ of $H^+$ consists of those vertices in $H^+$ that have an edge in $G^+$ that leaves $H^+$.
Let $\hat H^+$ be the graph obtained by appending to every boundary node $v \in B(H^+)$ an \emph{auxiliary path} of length $\ell_1$ with node set $a_{v,1},\dots,a_{v,\ell_1}$.

We will define a surjective mapping $\phi$ from random walks $W$ in $G^+$ of length $L = \ell_1 \ell_2$ to walks $\phi(W)$ in $\hat H^+$ of length $\ell_1$.
If $W = v_0,\dots,v_L$, let $i_0,\dots,i_k$ be the timesteps for which $v_{i_j} \in H^+$.
The mapping is defined essentially by contracting all length-$(<\ell_2)$ walks outside of $H^+$, and routing any length-$(\geq\ell_2)$ walk outside of $H^+$ onto an auxiliary path.
More precisely:
\begin{itemize}
\item
\emph{Contract:}
If $W$ does not perform $\ell_2$ or more consecutive steps outside of $H^+$ before it made $\ell_1$ steps (in total) in $H^+$, then
\[
\phi(W) = v_{i_0},\dots,v_{i_{\ell_1}}.
\]
\item
\emph{Contract and route:}
In the other case, let $i_r$ be the first index that precedes a walk of $\geq \ell_2$ consecutive steps outside of $H^+$.
Then
\[
\phi(W) = v_{i_0},\dots,v_{i_r},a_{v_{i_r},1},\dots,a_{v_{i_r},\ell_1-i_r}.
\]
\end{itemize}

\begin{figure}[htb]
\centering
\includegraphics[width=.9\textwidth]{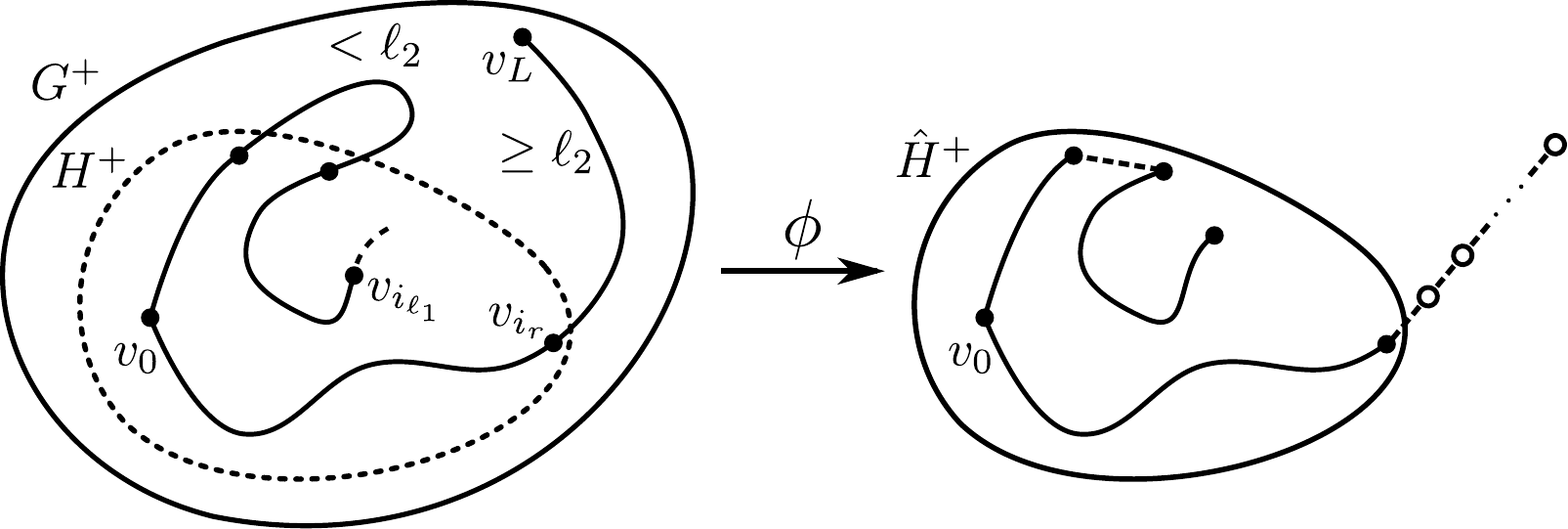}
\caption{Illustration of mapping $\phi$ from walks of length $L = \ell_1 \ell_2$ on $G^+$ to walks of length $\ell_1$ on $\hat H^+$.}
\label{fig:walk-mapping}
\end{figure}

The distribution $\Pr_G(W)$ over length-$L$ walks $W$ in $G^+$ induces a distribution $\Pr_M(U)$ over length-$\ell_1$ walks $U$ in $\hat H^+$ by setting
\[
\Pr_M(U)
= \sum_{W:\phi(W)=U} \Pr_{G^+}(W).
\]
We now define a Markov chain $M(H^+)$ on $\hat H^+$ such that length-$\ell_1$ walks $U$ of $M(H^+)$ have the same distribution $\Pr_M(U)$.
In the definition of $M(H^+)$ we use the quantity $p^H_{v,u}(t)$ for $v,u \in H^+$, which denotes the probability that a random walk from $v$ will take $t-1$ steps outside of $H^+$ and end in $u$ at the $t$-th step.
The Markov chain $M(H^+)$ is defined as follows:
\begin{itemize}
\item
For every $v,u \in H^+$: $q_{v,u} = \sum_{t=1}^{\ell_2-1} p^H_{v,u}(t)$.
\item
For every $v \in B(H^+)$:
\begin{itemize}
\item
$q_{v,a_{v,1}} = \sum_{u \in H} \sum_{t \geq \ell_1} p^H_{v,u}(t)$,
\item
for every $\ell$, $1 \leq \ell < \ell_1$, $q_{a_{v,\ell},a_{v,\ell+1}} = 1$,
\item
for every $u \in H$, $q_{a_{v,\ell_1},u} = q_{v,a_{v,1}}^{-1} \sum_{t \geq \ell_2} p^H_{v,u}(t)$.
\end{itemize}
\end{itemize}

\noindent
The following claim states that if we find a bad cycle with the modified Markov chain, then we will also find a bad cycle using the original random walk.
We say that a set of walks \emph{results in a bad cycle} if the original graph $G$ has a negative edge between two distinct vertices of the walks.

\begin{claim} \label{claim:mapping}
Assume that there exists $s$ and $m$ such that $m$ walks of $M(H^+)$ of length $\ell_1$ and starting from $s$ result in a bad cycle with probability at least $0.99$.
Then $m$ random walks in $G^+$ of length $L = \ell_1 \ell_2$ and starting from $s$ will also result in a bad cycle with probability at least $0.99$.
\end{claim}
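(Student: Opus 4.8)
The plan is to exploit the defining property of the mapping $\phi$ from the construction of $M(H^+)$: $\phi$ is a surjection from length-$L$ walks in $G^+$ onto length-$\ell_1$ walks in $\hat H^+$, and it pushes the random-walk distribution $\Pr_{G^+}$ forward to exactly the $M(H^+)$-walk distribution, since $\Pr_M(U)=\sum_{W:\phi(W)=U}\Pr_{G^+}(W)$ and $M(H^+)$ was built so that its length-$\ell_1$ walks are distributed as $\Pr_M$. This yields a coupling at essentially no cost: draw $m$ independent length-$L$ random walks $W_1,\dots,W_m$ in $G^+$ started at $s$, and set $U_j:=\phi(W_j)$. Then $U_1,\dots,U_m$ are distributed as $m$ independent length-$\ell_1$ walks of $M(H^+)$ started at $s$ (note $s\in H^+$, so each $\phi(W_j)$ indeed starts at $s$).

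The second ingredient is purely combinatorial. For any walk $W=v_0,\dots,v_L$ with $v_{i_0},v_{i_1},\dots$ its subsequence of timesteps inside $H^+$, both branches of the definition of $\phi$ (``contract'' and ``contract and route'') output a walk whose non-auxiliary nodes form a prefix of $v_{i_0},v_{i_1},\dots$; the only extra nodes introduced are auxiliary path nodes $a_{v,\ell}$. Consequently every vertex of $G$ visited by $\phi(W)$ is also visited by $W$. Moreover the auxiliary nodes $a_{v,\ell}$ appended to the boundary $B(H^+)$ are not vertices of $G$ at all, so they carry no edges of $G$ and in particular can never be an endpoint of a negative edge of $G$.

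Combining the two facts: suppose the walks $U_1,\dots,U_m$ result in a bad cycle, i.e.\ $G$ has a negative edge $(u,u')$ with $u\neq u'$ and both $u,u'$ visited by some $U_j$'s. Since $u,u'\in V(G)$, they cannot be auxiliary nodes, hence $u,u'\in H^+$, hence by the previous paragraph both are visited by the corresponding $W_j$'s; so $W_1,\dots,W_m$ also result in a bad cycle. Under the coupling this gives
\[
\Pr\bigl[W_1,\dots,W_m\text{ result in a bad cycle}\bigr]\;\ge\;\Pr\bigl[U_1,\dots,U_m\text{ result in a bad cycle}\bigr]\;\ge\;0.99,
\]
where the last inequality is the hypothesis (the $U_j$'s being exactly $M(H^+)$-walks). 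This is the claim.

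The proof is short; the only point I would spell out with care is that a bad cycle detected among the $M(H^+)$-walks genuinely corresponds to a pair of \emph{distinct vertices of $G$} and is not an artifact of the auxiliary paths — but this is immediate because $\hat H^+\setminus H^+$ contains no vertices of $G$. Everything else is bookkeeping about $\phi$ that is already available from the construction of $M(H^+)$, so I do not expect a genuine obstacle here; the substance of the bounded-degree analysis lies in Lemma~\ref{lem:decomp} and Lemma~\ref{lem:bad-cycle}, not in this transfer step.
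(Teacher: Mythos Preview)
Your proposal is correct and follows essentially the same approach as the paper: both use that $\phi$ pushes $\Pr_{G^+}$ forward to $\Pr_M$ together with the monotonicity $I_{\mathrm{bad}}(W_1,\dots,W_m)\ge I_{\mathrm{bad}}(\phi(W_1),\dots,\phi(W_m))$. The paper writes the coupling out as an explicit sum over walks, whereas you phrase it probabilistically and add a bit more detail on why auxiliary nodes cannot witness a bad cycle; the content is the same.
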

\begin{proof}
Let $U_1$ and $U_2$ denote length-$t$ walks of $M(H^+)$ that result in a bad cycle.
If $W_1$ and $W_2$ are length-$L$ walks on $G^+$ with $\phi(W_1) = U_1$ and $\phi(W_2) = U_2$, then $W_1$ and $W_2$ will also result in a bad cycle.
Now let $I_{\mathrm{bad}}(X_1,\dots,X_m)$ denote the indicator of whether the walks $X_1,\dots,X_m$ (in $G^+$ or $M(H^+)$) form a bad cycle.
By our former remark we know that $I_{\mathrm{bad}}(W_1,\dots,W_m) \geq I_{\mathrm{bad}}(\phi(W_1),\dots,\phi(W_m))$.
We can lower bound the probability that $m$ walks in $G^+$ form a bad cycle:
\begin{align*}
\sum_{W_1,\dots,W_m} &\Pr_{G^+}(W_1) \dots \Pr_{G^+}(W_m) I_{\mathrm{bad}}(W_1,\dots,W_m) \\
&= \sum_{U_1,\dots,U_m} \left( \sum_{W_1:\phi(W_1)=U_1} \dots \sum_{W_m:\phi(W_m)=U_m} \Pr_{G^+}(W_1) \dots \Pr_{G^+}(W_m) I_{\mathrm{bad}}(W_1,\dots,W_m) \right) \\
&\geq \sum_{U_1,\dots,U_m} \left( \sum_{W_1:\phi(W_1)=U_1} \dots \sum_{W_m:\phi(W_m)=U_m} \Pr_{G^+}(W_1) \dots \Pr_{G^+}(W_m) I_{\mathrm{bad}}(U_1,\dots,U_m) \right) \\
&= \sum_{U_1,\dots,U_m} \Pr_M(U_1) \dots \Pr_M(U_m) I_{\mathrm{bad}}(U_1,\dots,U_m) \\
&\geq 0.99,
\end{align*}
which proves our claim.
\end{proof}

\subsection{Sufficient condition for bad cycle} \label{app:bad-cycle}
\badcyclelemma*
\begin{proof}
For $1 \leq i,j \leq m$, let $\eta_{i,j}$ be the random variable so that $\eta_{i,j} = 1$ if the $i$-th and $j$-th walk form a bad cycle and otherwise $\eta_{i,j} = 0$.
We will bound the probability that we do not find a bad cycle, which is $\Pr(\sum_{i<j} \eta_{i,j} = 0)$.
To this end we will use the bound
\begin{equation} \label{eq:cheb}
\Pr\left[ \sum_{i<j} \eta_{i,j} = 0 \right]
\leq \frac{\Var\left[ \sum_{i<j} \eta_{i,j} \right]}{\Exp[\sum_{i<j} \eta_{i,j}]^2}
\end{equation}
which can be derived from Chebyshev's inequality \cite[Theorem 4.3.1]{alon2004probabilistic}.

To bound $\Var\left[ \sum_{i<j} \eta_{i,j} \right]$, we define $\bar\eta_{i,j} = \eta_{i,j} - \Exp[\eta_{i,j}]$.
By \cite[Equation (20)]{goldreich1999sublinear} we can then rewrite
\[
\Var\left[ \sum_{i<j} \eta_{i,j} \right]
= \Exp\left[ \left(\sum_{i<j} \bar\eta_{i,j} \right)^2 \right]
= \binom{m}{2} \, \Exp\left[ \bar\eta_{1,2}^2 \right]
	+ 4 \binom{m}{3} \, \Exp\left[ \bar\eta_{1,2} \bar\eta_{2,3} \right].
\]
The first term is bounded by $\Exp\left[ \bar\eta_{1,2}^2 \right] \leq \Exp\left[ \eta_{1,2}^2 \right] = \Exp\left[ \eta_{1,2} \right]$.
Now we bound the second term, where we let $v_i$ denote the endpoint of the $i$-th walk:
\begin{align*}
\Exp\left[ \bar\eta_{1,2} \bar\eta_{2,3} \right]
\leq \Exp\left[ \eta_{1,2} \eta_{2,3} \right]
&= \Pr\left[ \eta_{1,2} = 1 \text{ and } \eta_{2,3} = 1 \right] \\
&= \Pr\left[ \eta_{1,2} = 1 \text{ and } (v_2,v_3) \in E^- \right] \\
&= \sum_u \Pr\left[ \eta_{1,2} = 1 \text{ and } v_2 = u \text{ and } (u,v_3) \in E^- \right] \\
&= \sum_u \Pr\left[ (u,v_3) \in E^- \right] \cdot \Pr\left[ \eta_{1,2} = 1 \text{ and } v_2 = u \right] \\
&= \sum_u \Bigg( \sum_{w:(u,w) \in E^-} \Pr\left[ v_3 = w \right] \Bigg) \cdot \Pr\left[ \eta_{1,2} = 1 \text{ and } v_2 = u \right] \\
&\leq \left( d \max_w \Pr[v_3 = w] \right) \sum_u \Pr\left[ \eta_{1,2} = 1 \text{ and } v_2 = u \right] \\
&= \left( d \max_w \Pr[v_3 = w] \right) \Exp[\eta_{1,2}] \\
&\leq d F \alpha \, \Exp[\eta_{1,2}].
\end{align*}
If we denote $\gamma = \Exp[\eta_{1,2}]$ then we get the bound $\Var\left[ \sum_{i<j} \eta_{i,j} \right]
\in O( m^2 \gamma + m^3 d F \alpha \gamma )$.
Combined with \eqref{eq:cheb} this gives the bound
\[
\Pr\left[ \sum_{i<j} \eta_{i,j} = 0 \right]
\in O\left( \frac{m^2 \gamma + m^3 d F \alpha \gamma}{m^4 \gamma^2} \right).
\]
If we let $E^-_S$ denote the set of negative edges with both endpoints inside $S$ then we can lower bound
\[
\gamma
= \sum_{(u,v) \in E^-_S} q_{s,u}(t) q_{s,v}(t)
\geq \alpha^2 |E^-_S|.
\]
Combined with the fact that $m \in \Omega\left( \frac{F}{\epsilon \alpha \sqrt{|S|}} \right)$ this gives the bound $\Pr\left[ \sum_{i<j} \eta_{i,j} = 0 \right] \in O(\epsilon/(dF^2) + 1/\sqrt{|S|})$, which is $\leq 0.01$ for the right choice of constants.
\end{proof}

\end{document}